\title{Competitive Routing in Hybrid Communication Networks\footnote{This work was partially supported by the German Research Foundation (DFG) within the Collaborative Research Center 'On-The-Fly Computing' (SFB 901).}}
\titlerunning{Competitive Routing in Hybrid Communication Networks} 
\author[1]{Daniel Jung}
\author[2]{Christina Kolb}
\author[3]{Christian Scheideler}
\author[4]{Jannik Sundermeier}
\affil[1]{Heinz Nixdorf Institute \& Computer Science Department, Paderborn University, Paderborn, Germany\\
  \texttt{jungd@hni.upb.de}}
\affil[2]{Department of Computer Science, Paderborn University, Paderborn, Germany\\
	\texttt{ckolb@mail.uni-paderborn.de}}
\affil[3]{Department of Computer Science, Paderborn University, Paderborn, Germany\\
	\texttt{scheideler@uni-paderborn.de}}
\affil[4]{Department of Computer Science, Paderborn University, Paderborn, Germany\\
  \texttt{janniksu@mail.uni-paderborn.de}}
\authorrunning{D. Jung, C. Kolb, C. Scheideler and J. Sundermeier} 
\subjclass{C.2.4 Distributed Systems}
\keywords{greedy routing, ad hoc networks, convex hulls, c-competitiveness}
\newcommand{\localDel}[0]{$LDel^{2}(V)$}
\begin{document}

\maketitle
\begin{abstract}
Routing is a challenging problem for wireless ad hoc networks, especially when the nodes are mobile and spread so widely that in most cases multiple hops are needed to route a message from one node to another. In fact, it is known that any online routing protocol has a poor performance in the worst case, in a sense that there is a distribution of nodes resulting in bad routing paths for that protocol, even if the nodes know their geographic positions and the geographic position of the destination of a message is known. The reason for that is that radio holes in the ad hoc network may require messages to take long detours in order to get to a destination, which are hard to find in an online fashion.

In this paper, we assume that the wireless ad hoc network can make limited use of long-range links provided by a global communication infrastructure like a cellular infrastructure or a satellite in order to compute an abstraction of the wireless ad hoc network that allows the messages to be sent along near-shortest paths in the ad hoc network. We present distributed algorithms that compute an abstraction of the ad hoc network in $\mathcal{O}\left(\log ^2 n\right)$ time using long-range links, which results in $c$-competitive routing paths between any two nodes of the ad hoc network for some constant $c$ if the convex hulls of the radio holes do not intersect. We also show that the storage needed for the abstraction just depends on the number and size of the radio holes in the wireless ad hoc network and is independent on the total number of nodes, and this information just has to be known to a few nodes for the routing to work.
\end{abstract}

\hfill \\
\noindent
This paper is eligible for best student paper award (all of the authors are full-time students).

\section{Introduction}
Nowadays almost every person has a cell phone. Hence, in a city center the density of cell phones would, in principle, be sufficiently high to set up a well-connected wireless ad hoc network spanning the entire city center, which could then be used for many interesting applications in the area of social networks. Wireless ad hoc networks have the advantage that there is no limit (other than the bandwidth and battery constraints) on the amount of data that can be exchanged while the amount of data that can be transferred at a reasonable rate via long-range links using the cellular infrastructure or satellite is limited (by some data plan) or costly. However, routing in a mobile ad hoc network is challenging, even if the geographic position of the destination is known, since buildings or other obstacles like rivers may create radio holes that make it non-trivial to find a near-shortest routing path. So the question we address in this paper is:

Can long-range links be used effectively to find near-shortest routing paths in the ad hoc network?

A simple solution to that problem would be that all nodes regularly post their geographic position and the nodes within their communication range to a server in the Internet. This would allow the server to compute optimal routing paths so that whenever a node wants to forward a message to a certain destination, the server can tell it which of the neighbors to send it to. An alternative approach that we are pursuing in this paper is a purely peer-to-peer based approach in which no other equipment other than the cell phones and an infrastructure for the long-range links needs to be used. To the best of our knowledge, our approach is the first one that is making use of a global communication infrastructure in a peer-to-peer manner in order to efficiently determine short routing paths for an ad hoc network. Wireless ad hoc networks have been considered before that utilize base stations in order to exchange messages more effectively, but there, messages will be sent via long-range links to bridge long distances while we will only allow messages to be sent via ad hoc links.

\subsection{Model}

\begin{figure*}[h]
	\centering
	\includegraphics[width=\textwidth]{}
	\caption{An overview of our approach. It contains the detection of radio holes (1), the computation of a hole abstraction (2) and a routing algorithm that finds $c$-competitive paths (3). The blue regions are called \emph{bay areas}.}
	\label{figure:overview}
\end{figure*}
\newpage
Throughout this paper, we consider $V \subset \mathbb{R}^2$ to be a set of nodes in the Euclidean plane with unique IDs (e.g., phone numbers), where $\vert V\vert =n$.
For any given pair of nodes $u=\left(u_x,u_y\right)$, $v=\left(v_x,v_y\right)$, we denote the Euclidean distance between $u$ and $v$ by $\vert\vert uv \vert \vert =$  $\sqrt{\left(u_x-v_x\right)^2 + \left(u_y-v_y\right)^2}$.
We model our cell phone network as a hybrid directed graph $H=\left(V,E,E_{AH}\right)$ where the node set $V$ represents the set of cell phones, an edge $\left(v,w\right)$ is in $E$ whenever $v$ knows the phone number (or simply {\em ID}) of $w$, and an edge $\left(v,w\right) \in E$ is also in the {\em ad hoc edge set} $E_{AH}$ whenever $v$ can send a message to $w$ using its Wifi interface.
For all edges $\left(v,w\right) \in E \setminus E_{AH}$, $v$ can only use a long-range link to directly send a message to $w$.
We adopt the {\em unit disk graph model} for the edges in $E_{AH}$.

\begin{definition}
	For any point set $V\subseteq \mathbb{R}^2$ the \emph{Unit Disk Graph} of $V$, $\mathrm{UDG}\left(V\right)$, is a bi-directed graph that contains all edges $\left(u,v\right)$ with $\vert\vert uv\vert \vert\leq 1$.
\end{definition}

We assume $\mathrm{UDG}\left(V\right)$ to be strongly connected so that a message can be sent from every node to every other node in $V$ by just using ad hoc edges.
While the ad hoc edges are fixed in Sections~\ref{section:preliminaries} - \ref{section:concreteProtocol} , the nodes can nevertheless change $E$ over time:
If a node $v$ knows the IDs of nodes $w$ and $w'$, then it can send the ID of $w$ to $w'$, which adds $\left(w,w'\right)$ to $E$.
This procedure is called \emph{ID-introduction}.
Alternatively, if $v$ deletes the address of some node $w$ with $\left(v,w\right) \in E$, then $\left(v,w\right)$ is removed from $E$.
There are no other means of changing $E$, i.e., a node $v$ cannot learn about an ID of a node $w$ unless $w$ is in $v$'s UDG-neighborhood or the ID of $w$ is sent to $v$ by some other node. \\
Moreover, we consider synchronous message passing in which time is divided into rounds.
More precisely, we assume that every message initiated in round $i$ is delivered at the beginning of round $i+1$, and a node can process all messages in a round that have been delivered at the beginning of that round.

\subsection{Objective} \label{section:objective}

Our objective is to design an efficient routing algorithm for ad hoc networks, where the source $s$ of a message knows the ID of the destination $t$, or in other words, $\left(s,t\right) \in E$.
This is a reasonable constraint since cell phone users normally wouldn't call cell phones whose users are unknown to them. Thus, whenever a message needs to be sent from a source $s$ to some destination $t$, we assume that the geographic position of $t$ is known, since $s$ can ask $t$ via a long-range link for $t's$ geographic position before sending the message towards $t$ using the ad hoc network.

Our routing algorithm consists of two parts:
After determining the radio holes of the wireless ad hoc network, we compute an abstraction, i.e., a compact representation of these radio holes and use that abstraction in order to route messages along $c$-competitive paths.
See Figure~\ref{figure:overview} for a visual description of these parts. \\
We call a routing strategy {\em $c$-competitive} if for all node pairs $\left(s,t\right)$, the routing path $(s, \dots, t)$ from $s$ to $t$ obtained by the strategy satisfies $||(s, \dots, t)||\leq c\cdot d\left(s,t\right)$, where $||(s, \dots, t)||$ denotes the Euclidean length of $(s, \dots, t)$ and $d\left(s,t\right)$ denotes the shortest Euclidean length of a path in $\mathrm{UDG}\left(V\right)$ from $s$ to $t$.

We will focus on computing suitable abstractions of radio holes in the ad hoc network. The intuition behind that is simple: if there are no radio holes, then simple greedy routing (i.e., always take the neighbor that is closest to the destination) would already give us short routing paths to arbitrary destinations. Radio holes can be specified by the nodes along its boundary, but there can be many such nodes. Therefore, we will also look at more compact representations of radio holes like the (nodes forming the) convex hull of its boundary. Considering convex hulls as radio hole abstractions makes sense because in huge cities like New York City the shape of radio holes (caused by obstacles like buildings) is in many cases convex or close to a convex shape, and these shapes do not overlap. In order to obtain the desired abstraction, we will make use of ID-introductions in order to form an overlay network that allows us to compute these abstractions in a distributed manner using the long-range links. Since sending messages via long-range links is costly (in terms of money), our goal is to keep the long-range communication work of the nodes as low as possible.

\subsection{Our Contributions}

We consider any hybrid graph $G = (V,E,E_{AH})$ where the Unit Disk Graph of $V$ is connected. Let $H$ be the set of radio holes in $G$ and $C$ be the set of convex hulls of radio holes in $H$. $P(h)$ denotes the length of the perimeter of a radio hole $h \in H$. Further, $L(c)$ denotes the circumference of a minimum bounding box of a convex hull $c \in C$. Our main contribution is:

\begin{theorem} \label{theorem:mainTheorem}
	For any distribution of the nodes in $V$ that ensures that UDG($V$) is connected and of bounded degree and that the convex hulls of the radio holes do not overlap, our algorithm computes an abstraction of UDG($V$) in $\mathcal{O}(\log^2 n)$ communication rounds using only polylogarithmic communication work at each node so that $c$-competitive paths between all source-destination pairs can be found in an online fashion.
	
	The space needed by the convex hull nodes of the radio holes is \\ $\mathcal{O}\left( \sum_{c \in C} L(c)\right)$. Nodes lying on the boundary of radio holes need storage of size $\mathcal{O}\left( \max_{h \in H} P(h) \right)$. For every other node, the space requirement is constant.
\end{theorem}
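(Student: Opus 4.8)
The plan is to decompose the statement into its four independent claims---running time, communication work, $c$-competitiveness, and the three space bounds---and to establish each by composing the lemmas developed for the three phases shown in Figure~\ref{figure:overview}: radio-hole detection, hole abstraction, and online routing. Since each phase is analyzed separately in Sections~\ref{section:preliminaries}--\ref{section:concreteProtocol}, the main work here is to argue that the phases compose without inflating any of the bounds and that the global non-overlap assumption on the convex hulls is the only hypothesis needed to glue them together.

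For the running time and work bounds, I would first observe that the detection phase identifies the boundary nodes of every radio hole, and that via repeated ID-introduction these nodes build a bounded-degree overlay on which the abstraction can be computed. On such an overlay a distributed divide-and-conquer computation of the convex hull of each hole boundary runs in $\mathcal{O}(\log^2 n)$ rounds, with each participating node sending and receiving only polylogarithmically many long-range messages per round; the $\mathcal{O}(\log^2 n)$ factor arises from the logarithmic number of merge levels, each of which itself costs $\mathcal{O}(\log n)$ rounds to aggregate over the overlay. Because UDG($V$) has bounded degree, the local processing and the per-node message volume stay polylogarithmic throughout, and since the two phases run one after the other their round counts simply add to $\mathcal{O}(\log^2 n)$.

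The heart of the proof is $c$-competitiveness, which is where I expect the main obstacle to lie. The idea is that, absent holes, greedy routing toward $t$ already yields paths of length $\mathcal{O}(d(s,t))$, so the only source of long detours is the need to circumvent radio holes. Using the precomputed convex hulls, the online routing advances along straight-line segments until it is blocked and then routes along the boundary of the obstructing convex hull through the associated bay areas. I would bound the resulting path length by charging each detour to the perimeter of the convex hull it circumnavigates and showing, via the non-overlap hypothesis, that any shortest UDG path from $s$ to $t$ must itself pay a comparable cost to pass the same obstacle; summing these charges and comparing against $d(s,t)$ then yields a constant factor $c$ independent of $n$ and of the number of holes. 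The delicate point is ruling out pathological interactions between nearby holes: the assumption that the convex hulls do not overlap is exactly what prevents a message from being forced into a cascade of detours that no shortest path would incur, so this hypothesis must be used carefully in the charging argument.

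Finally, the space bounds follow from the structure of the abstraction. The description of a convex hull $c$ can be stored distributed across its hull nodes using $\mathcal{O}(L(c))$ space in total, so that summing over all hulls gives the aggregate bound $\mathcal{O}\left(\sum_{c \in C} L(c)\right)$ claimed for the convex hull nodes. A node on the boundary of a hole $h$ retains enough information about $h$ to perform its local detour decisions, which costs $\mathcal{O}(P(h)) = \mathcal{O}\left(\max_{h \in H} P(h)\right)$ per node. Every remaining node only needs its own position and the destination's position to take greedy steps, so constant space suffices. Combining the round and work bounds of the first two phases, the competitiveness argument of the third, and these space bounds establishes the theorem.
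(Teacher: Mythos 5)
Your competitiveness argument --- the heart of the proof, as you say --- would fail, and it fails in a way the paper explicitly warns about. You propose a local strategy (advance toward $t$ until blocked, circumnavigate the blocking hull, charge the detour to that hull's perimeter) and claim that any shortest UDG path ``must itself pay a comparable cost to pass the same obstacle.'' That charging claim is false: take a long, thin convex hull of perimeter $\Theta(M)$ whose interior is clipped by $\overline{st}$ only near one corner; the shortest path slips past at cost $O(1)$, while boundary-following in the wrong direction pays $\Theta(M)$. With several such hulls the wrong local choices compose, and non-overlap of the hulls does \emph{not} prevent this cascade --- indeed the paper cites Kuhn and Wattenhofer~\cite{Kuhn:2002:AOG:570810.570814} precisely for the fact that no strategy using only local knowledge can be constant-competitive. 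The paper's proof is structurally different: the long-range phase connects \emph{all} convex hull nodes into a clique (Section~\ref{section:convexHullDistribution}, via the Gmyr et al.\ overlay tree, which is also where the $\mathcal{O}(\log^2 n)$ round bound actually comes from --- the hull computation itself is only $\mathcal{O}(\log n)$ expected rounds via pointer jumping, Reif--Valiant sorting, and Miller's hypercube algorithm). A blocked message then has its hole node insert $t$ into a locally stored Overlay Delaunay Graph of all hull nodes and compute a \emph{global} shortest path there; Lemma~\ref{lemma:polygonalPath} and Lemma~\ref{lemma:convexHullSufficient} show hull nodes suffice, and Chew's Algorithm realizes each overlay edge at factor $5.9$ (Theorem~\ref{theorem:convexHullPaths}), yielding explicit constants ($17.7$, or $35.37$ with the Delaunay overlay in place of the Visibility Graph). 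Your proposal never feeds the globally distributed hull information into the routing decision, so no charging argument of the kind you sketch can close the gap.

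Two further points. First, your space accounting for hull nodes contradicts the routing: you store each hull distributed over its own nodes with $\mathcal{O}(L(c))$ total, but then no single node can run the shortest-path computation the protocol needs. In the paper, \emph{each} convex hull node stores the entire Overlay Delaunay Graph; the bound $\mathcal{O}\left(\sum_{c \in C} L(c)\right)$ per such node follows from Lemma~\ref{lemma:randomSecond} (a hull with bounding-box circumference $L$ has $\mathcal{O}(L)$ nodes) together with planarity of the Delaunay overlay. Second, you omit the cases where $s$ or $t$ lies inside a convex hull (Cases 2--5 of Section~\ref{section:convexHullsRoutingProtocol}); the paper handles these with a dominating set of the hole ring in each bay area (Jia et al., $\mathcal{O}(\log n)$ rounds since $\Delta = 2$), and it is exactly this dominating-set storage --- with the bounded-degree assumption bounding the ring size in terms of the perimeter --- that justifies the $\mathcal{O}\left(\max_{h \in H} P(h)\right)$ bound for boundary nodes, which your ``local detour decisions'' account asserts but does not derive.
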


The rest of this paper is dedicated to the proof of Theorem \ref{theorem:mainTheorem}. For that, we use the following approach:
\begin{enumerate}
	\item Given the Unit Disk Graph, 
	we compute the $2$-localized Delaunay Graph. This only needs $\mathcal{O}(1)$ communication rounds. The 2-localized Delaunay Graph allows the nodes to detect whether they are at the boundary of a radio hole. Nodes at the boundary can then form a ring.
	\item 
	We then develop a distributed algorithm that computes a convex hull of a ring of $n$ nodes in expected $\mathcal{O}\left(\log n\right)$ communication rounds.
	\item Afterwards, we introduce the nodes of the convex hulls to each other so that they form a clique. This will allow them to compute $c$-competitive paths for all source-destination pairs that are outside of a convex hull. The introduction requires $\mathcal{O}(\log^2 n)$ communication rounds. In order to handle the case that the source or destination lies inside a convex hull, we will make use of a Dominating Set of the boundary nodes. The Dominating Set can be computed in $\mathcal{O}(\log n)$ communication rounds.
\end{enumerate}
Finally, we also consider the dynamic scenario (i.e., UDG($V$) changes over time) in Section~\ref{section:dynamicScenario}.

\subsection{Related Work}

Many routing protocols have already been proposed for wired as well as wireless communication networks. Since the focus of this paper is on wireless networks, we will restrict our overview on related work to this area.
Most research on routing in wireless networks has been done in the context of mobile ad-hoc networks, i.e., wireless networks that do not rely on some external infrastructure for their operation.
There are basically two types of routing protocols for mobile ad hoc networks: table driven / proactive protocols, and on demand / reactive protocols.
In proactive protocols, the nodes keep updating their routing tables to maintain the latest view of the network. Examples of such protocols are DSDV, OLSR, and WARP. In reactive protocols, routes are only created when required. Examples of reactive protocols are DSR, AODV, and TORA. Also hybrid protocols, i.e., protocols that are a combination of proactive and reactive approaches, are known like SRP and ZRP. See, for example, \cite{gupta2011review} for a survey.

Proactive protocols only work well for small and relatively static ad hoc networks since otherwise the routing tables can get very large, and therefore the overhead of continuously updating them becomes prohibitively expensive.
To reduce these problems, hierarchical and cluster-based strategies have been proposed like FSR and CGSR, but if the diameter of the ad hoc network is large, it may still take a long time for the tables to be up-to-date. Reactive protocols can produce a significantly lower overhead if messages are sent to only a small set of nodes.
However, if many messages with different destinations are injected, the overhead of discovering routes can become prohibitively expensive.
To reduce this problem, location-based variants have been proposed like LAR, but these only work well if the radio holes are not too large.

Routing work in theory has mostly focused on approaches where routing paths do not have to be set up before sending out a message. Instead the focus has been on simple online routing strategies that are potentially based on a suitable overlay network consisting of a subset of the wireless connections available to the nodes. The most simple online strategy is to use a greedy strategy to route a message to a destination $t$: always forward the message to the neighbor closest to~$t$ (with respect to some metric). Unfortunately, greedy strategies like Compass routing ~\cite{10.1007/3-540-40996-3_5} fail for graphs with radio holes, i.e., they might get stuck at a dead end. This can be avoided with the help of suitable virtual coordinates for the nodes (e.g., \cite{7236907}), but computing these is quite expensive. Instead, Kuhn et al. ~\cite{kuhn2003worst} proposed GOAFR, a routing strategy that uses a combination of greedy and face routing, which can find paths with quadratic competitiveness~\cite{kuhn2003worst}. They also proved that this is worst-case optimal.
That is, it is not possible to design routing strategies which use only local knowledge and achieve a better competitiveness than quadratic. Their lower bound is based on the fact that a radio hole might have a complex structure, like a maze, making it hard to find a short path to a destination in an online fashion. Some other examples of the many routing strategies that have been proposed are \cite{rao1993robot,lumelsky1987algorithmic,ruhrup2006online}.
For example, R\"uhrup and Schindelhauer considered routing strategies for grids that contain failed nodes~\cite{ruhrup2006online}.
This is similar to our scenario as failed nodes behave like radio holes in an ad hoc network.
Their procedure uses a strategic search which distributes a message over multiple paths. They proved that their procedure is asymptotically optimal for their setting.
However, it is not clear how the strategy can be generalized to arbitrary node distributions.

So the question arises: How to make use of long-range links to find a suitable abstraction of the radio holes with a local strategy such that we obtain $c$-competitive paths for any source-destination pair in the underlying ad hoc network?

To answer this question, we consider a hybrid communication model. A Hybrid Communication Network has been introduced in different contexts~\cite{hybridWiredWireless,710380}.
To the best of our knowledge, we are the first ones that consider these types of networks for the purpose of finding paths in ad hoc networks.

At the core of our algorithm is a 2-localized Delaunay Graph of the ad hoc network. A 2-localized Delaunay Graph is related to the Delaunay triangulation, which was intruduced in~\cite{delaunay}. The advantage of using Delaunay graphs is that they are Euclidean $c$-spanners, which means that they contain a path for any pair of nodes of length at most $c$ times their Euclidean distance. The currently best known bound for $c$ is $1.998$ and was proven by Xia~\cite{xiaDelaunaySpanner}. Because wireless communication is only possible for limited distances, Delaunay graphs are not directly applicable to ad-hoc networks, but Delaunay graphs restricted to UDG edges, which are known as Restricted Delaunay Graphs \cite{localDelaunay}. Restricted Delaunay graphs are still hard to compute, so we will focus on the related 2-localized Delaunay Graph, which can be built in a constant number of rounds \cite{localDelaunay}. Based on that graph, one can use Chew's Algorithm \cite{boseUpperAndLowerBounds} to efficiently route messages if there are no radio holes.

Chew's Algorithm routes to nodes of triangles that intersect the direct line segment from the source node $s$ to the destination $t$. In cases where these triangulations are incomplete due to radio holes, routing strategies that only consider nodes of intersected triangles cannot be $o\left(n\right)$-competitive when $n$ is the number of nodes in the network \cite{10.1007/978-3-319-62389-4_6}. To avoid problems with radio holes, we calculate an abstraction of radio holes, i.e., we calculate the convex hulls of radio holes with the help of long-range links.
Computing the convex hull of a set of points is one of the most considered problems in computational geometry. There are publications dealing with distributed computation of convex hulls such as~\cite{bez1990distributed,rajsbaum2011some}.
Since we are interested in algorithms with polylogarithmic runtime, we use the parallel algorithm which computes the convex hull in time $\mathcal{O}\left(\log n\right)$ by~\cite{parallelConvexHullMiller}.
This algorithm makes use of a hypercube, which we build via the technique of pointer doubling.
This technique has been mentioned by Wyllie for the first time~\cite{Wyllie:1979:CPC:909252}.
Further, a parallel sorting algorithm in a hypercube is included in the preprocessing.
Batcher's Bitonic Sort has a deterministic parallel runtime of $\mathcal{O}(\log^ 2 n)$. Alternatively, the randomized algorithm of Reif and Valiant has an expected runtime of $\mathcal{O}(\log n)$.

Our approach of getting information about the 2-localized Delaunay graph inside a convex hull is based on calculating a Dominating Set of a special set of nodes inside this convex hull.
For the distributed computation of Dominating Sets, several algorithms have been proposed in the literature.
A popular algorithm has been introduced by Jia et al., which computes a $\mathcal{O}\left(\log \Delta\right)$ approximation of the smallest possible dominating set, where $\Delta$ denotes the degree of the network~\cite{jia2002efficient}.
Note that the computation of smallest dominating sets is proven to be NP complete. The algorithm requires $\mathcal{O}\left(\log n \cdot \log \Delta\right)$ communication rounds with high probability. \\

\section{Preliminaries} \label{section:preliminaries}
In this section, we introduce the preliminaries concerning the network topology and general results about routing in ad hoc networks.
In Section~\ref{section:adhocspanner}, we introduce the network topology for the ad hoc network in this work, the $2$-localized Delaunay Graph.
Moreover, we explain properties of the network structure.
Section~\ref{section:onlineRoutingTwoLocal} explains general results about routing in $2$-localized Delaunay Graphs.
\subsection{Spanner-Properties of the Ad Hoc Network} \label{section:adhocspanner}
In this work, we consider a $2$-localized Delaunay Graph \localDel{} as topology for the ad hoc network.
Before we give a formal definition of this topology, we introduce the Delaunay Graph.
Throughout this paper, we assume the set of nodes $V$ to be non-pathological, i.e., there are no three nodes on a line and no four nodes on a cycle.
Moreover, we assume that the coordinates of each node are unique and thus there are no two nodes on the same position.

\begin{definition}
	Let $\bigcirc\left(u,v,w\right)$ be the unique circle through the nodes $u, v$ and $w$ and $\triangle \left(u,v,w\right)$ be the triangle formed by the nodes $u,v$ and $w$.
	For any $V\subseteq \mathbb{R}^2$, the \emph{Delaunay Graph} $\mathrm{Del}\left(V\right)$ of $V$ contains all triangles $\triangle \left(u,v,w\right)$ for which $\bigcirc\left(u,v,w\right)$ does not contain any further node besides $u,v$ and $w$.
\end{definition}
\noindent
The Delaunay Graph does not restrict the length of an edge in any sense.
Hence, it is not a suitable graph structure for the ad hoc network as edges may exceed the transmission range of a node.
The $2$-localized Delaunay Graph is a structure that only allows edges which do not exceed the transmission range of a node.
Additionally, it can be constructed efficiently in a distributed manner.
The following definitions can be found in~\cite{localDelaunay}.

\begin{definition}
	A triangle $\triangle \left(u,v,w\right)$ satisfies the \emph{$k$-localized Delaunay property} if
	\begin{enumerate}
		\item all edges of $\triangle \left(u,v,w\right)$ have length at most $1$
		\item the interior disk of $\triangle \left(u,v,w\right)$ does not contain any node which can be reached within $k$ hops from $u,v$ or $w$ in UDG($V$).
	\end{enumerate}
	
\end{definition}

\begin{definition} \label{definition:klocalized}
	The \emph{$k$-localized Delaunay Graph} $LDel^k\left(V\right)$ consists of
	\begin{enumerate}
		\item all edges of $k$-localized triangles
		\item all edges $\left(u,v\right)$ for which the circle with diameter $\overline{uv}$ does not contain any further node $w \in V$ (\emph{Gabriel Edges})
	\end{enumerate}
\end{definition}
By choosing $k=2$, we obtain the $2$-localized Delaunay Graph which is also a planar graph.
Since $2$-localized Delaunay Graphs do not contain all edges of a corresponding Delaunay Graph, one cannot simply use routing strategies for Delaunay Graphs in our scenario.
We denote faces of the $2$-localized Delaunay Graph which are not triangles as \emph{holes}.
For the formal definition of holes, we distinguish between \emph{inner} and \emph{outer} holes.

\begin{definition} [Inner Hole] \label{definition:innerHole}
	Let $V \in \mathbb{R}^2$.
	An \emph{inner hole}  is a face of $LDel^2\left(V\right)$ with at least $4$ nodes.
\end{definition}

\begin{definition} [Outer Hole] \label{definition:outerHole}
	Let $V \in \mathbb{R}^2$. Furthermore, let $CH\left(V\right)$ be the set of all edges of the convex hull of $V$.
	Define $\overline{LDel^2}\left(V\right)$ to be the graph that contains all edges of \localDel{} and $CH\left(V\right)$.
	An \emph{outer hole} is a face in $\overline{LDel^2}\left(V\right)$ with at least $3$ nodes, that contains an edge $e \in CH\left(V\right)$ with $\|e\| > 1$.
\end{definition}
Nodes lying on the perimeter of a hole are called \emph{hole nodes}.
Note that the hole nodes of the same hole form a ring, i.e., each hole node  is adjacent to exactly two other hole nodes for each hole it is part of.

The choice of the $2$-localized Delaunay Graph as network topology is motivated by its \emph{spanner}-property.
We start with introducing spanner-properties of the original Delaunay Graph.
The Delaunay Graph $\mathrm{Del}\left(V\right)$ contains paths between every pair of nodes $v$ and $w$ of $V$ which are not longer than $c$ times their Euclidean distance.
We call these paths as follows:

\begin{definition}
	A \emph{path} $\left(v, \dots, w\right)$ between two nodes $v$ and $w$ in a geometric graph $G$ is a \emph{geometric $c$-spanning path} between $v$ and $w$, if its length is at most $c$ times the Euclidean distance between $w$ and $v$.
\end{definition}

Classes of graphs that contain such paths for every pair of nodes are called geometric $c$-spanners.

\begin{definition}
	A graph $G=\left(V,E\right)$ is called a \emph{geometric $c$-spanner}, if for all $v,w\in V$ there is a geometric $c$-spanning path $\left(v,\dots, w\right)$ in $G$.
\end{definition}

Delaunay Graphs are proven to be geometric $c$-spanners.
The currently best known bound on $c$ is $1.998$ and was proven by Xia~\cite{xiaDelaunaySpanner}.

\begin{theorem}
	There exists a path in a Delaunay Graph from node $s$ to $t$ of length less than $1.998 \cdot \|st\|$.
\end{theorem}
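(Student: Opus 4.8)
The plan is to establish the spanner property in two stages: first that $\mathrm{Del}(V)$ is a geometric $c$-spanner for \emph{some} constant, and then to drive $c$ down to the stated value $1.998$. For a source $s$ and destination $t$, the idea is to build a path that closely follows the straight segment $\overline{st}$. First I would fix $\overline{st}$ (without loss of generality horizontal, $s$ left of $t$) and consider the sequence of Delaunay faces that the segment crosses; since $\mathrm{Del}(V)$ is a triangulation, $\overline{st}$ enters each triangle through one edge and leaves through another, and the endpoints of these crossed edges split into an \emph{upper chain} and a \emph{lower chain} of Delaunay vertices lying just above and below the segment. The candidate path is obtained by walking along one of these chains from $s$ to $t$.

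The key geometric input is the empty-circle property: every Delaunay triangle has a circumcircle containing no other vertex, and since $\overline{st}$ passes through each such circumcircle, the ``far'' vertex of each crossed triangle cannot lie too far from the segment. I would make this quantitative with an angular charging argument in the Keil--Gutwin style: proceeding by induction on the number of vertices in the region bounded by $\overline{st}$ and the chain, each hop of the path is charged against the length of the segment it spans plus a controlled detour, where the detour is bounded by comparing an empty circular arc to its chord. Summing these charges over the whole chain yields a bound of the form
\[
  \|(s,\dots,t)\| \;\le\; \frac{2\pi}{3\cos(\pi/6)}\,\|st\|,
\]
which already proves the spanner property with $c \approx 2.42$; the base case is simply that a Delaunay edge $(s,t)$ contributes path length exactly $\|st\|$.

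Reducing the constant to $1.998$ is where the real difficulty lies, and this is the content of Xia~\cite{xiaDelaunaySpanner}. The improvement cannot be obtained by the coarse arc-versus-chord estimate above; instead it requires a substantially stronger inductive hypothesis, phrased in terms of the circle through $s$ and $t$ and the angles that the relevant vertices subtend at it, together with a careful decomposition of the problem into recursively handled sub-regions (depending on whether the pivot vertex of a crossed triangle falls inside or outside certain disks). The stretch of each sub-problem is then bounded recursively, and the worst case is pinned down by identifying the extremal configuration and summing the resulting convergent series of detour contributions. I expect this tight case analysis --- controlling every sub-region simultaneously so that the recursion closes at $1.998$ rather than at a larger fixed point --- to be the main obstacle, which is precisely why we invoke the result of Xia rather than reprove it here.
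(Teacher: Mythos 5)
Your proposal is correct and ends up exactly where the paper does: the paper offers no proof of this theorem at all, but simply cites Xia~\cite{xiaDelaunaySpanner} for the $1.998$ bound, and your argument likewise defers the tight constant to Xia after correctly sketching why the classical chain-following charging argument only yields $\frac{2\pi}{3\cos(\pi/6)} \approx 2.42$. Your added exposition of the crossed-triangle chains and of the stronger inductive setup behind Xia's improvement is accurate, but substantively you and the paper take the same route, namely invoking Xia's result rather than reproving it.
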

Xia argues that the bound of $1.998$ also relates to $2$-localized Delaunay Graphs~\cite{xiaDelaunaySpanner}.
However, these graphs are not spanners of the Euclidean metric but of the Unit Disk Graph.

\begin{theorem}
	In \localDel{} for $V \subset \mathbb{R}^2$, there exists a path between any pair of nodes $s$ and $t$ with length at most $1.998$ times their distance in $UDG(V)$.
\end{theorem}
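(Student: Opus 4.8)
The plan is to transfer the Euclidean $1.998$-spanner property of the full Delaunay graph $Del(V)$, established in the previous theorem, to the metric induced by $UDG(V)$ on $LDel^2(V)$. The first step is a reduction to single UDG edges: I claim it suffices to prove that for every pair $u,v$ with $\|uv\|\le 1$ there is a path in $LDel^2(V)$ of length at most $1.998\,\|uv\|$. Granting this, for an arbitrary pair $s,t$ I would take a shortest path $s=w_0,w_1,\dots,w_m=t$ in $UDG(V)$, whose edges all have length at most $1$ and whose total Euclidean length equals $d(s,t)$, replace each edge $(w_{i-1},w_i)$ by the corresponding $LDel^2(V)$ detour, and concatenate. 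The resulting walk lies entirely in $LDel^2(V)$, contains a path from $s$ to $t$, and has length at most $\sum_{i=1}^m 1.998\,\|w_{i-1}w_i\| = 1.998\,d(s,t)$, which is the claimed bound.

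To handle a single UDG edge I would first record the elementary containment $UDel(V)\subseteq LDel^2(V)$, where $UDel(V)$ denotes the subgraph of $Del(V)$ consisting of all Delaunay triangles whose three edges have length at most $1$. Indeed, such a triangle has an empty circumcircle, so its interior disk contains no node at all, and in particular none reachable within two hops in $UDG(V)$; hence it satisfies the $2$-localized Delaunay property and all of its edges appear in $LDel^2(V)$. Next I would invoke the previous theorem applied to the endpoints $u,v$ themselves: since $Del(V)$ is a $1.998$-spanner of the Euclidean metric, there is a Delaunay path from $u$ to $v$ of length at most $1.998\,\|uv\|\le 1.998$.

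The crux, and the step I expect to be the main obstacle, is to show that this Delaunay path (suitably chosen) can be realized inside $LDel^2(V)$, i.e., that it uses only short Delaunay edges lying in unit Delaunay triangles. This is exactly the point where Xia's refined analysis is needed: one follows the sequence of Delaunay triangles stabbed by the segment $\overline{uv}$ and argues, via the empty-circumcircle property together with the connectivity of $UDG(V)$ and the hypothesis $\|uv\|\le 1$, that every edge encountered along the construction has length at most $1$ and hence belongs to $UDel(V)\subseteq LDel^2(V)$. The delicate case is a Delaunay edge of length greater than $1$ whose appearance on the path must be excluded for nearby endpoints; controlling these long edges, rather than the concatenation bookkeeping, is where the real work lies, and it is precisely what Xia's argument supplies. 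Once this locality statement is in place, the single-edge bound follows and the reduction of the first paragraph completes the proof.
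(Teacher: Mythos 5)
The first thing to note is that the paper contains no proof of this theorem at all: it is imported from the literature, with the preceding sentence (``Xia argues that the bound of $1.998$ also relates to $2$-localized Delaunay Graphs''~\cite{xiaDelaunaySpanner}) serving as the entire justification, and with the supporting machinery living in the cited works~\cite{localDelaunay,xiaDelaunaySpanner}. Your proposal therefore cannot be compared to an in-paper argument; what it does is reconstruct the standard derivation from those references, and the scaffolding is essentially right. The reduction to single $UDG$ edges --- take a shortest path $s=w_0,\dots,w_m=t$ in $UDG(V)$, replace each unit-length edge by a $1.998$-competitive detour in \localDel{}, and concatenate --- is exactly how the literature converts a Euclidean stretch bound into one with respect to the $UDG$ metric, and isolating the locality of the short Delaunay path between the endpoints of a unit edge as the kernel supplied by Xia's analysis places the hard step where the cited works place it. Deferring that kernel to Xia is no worse than the paper's own citation-level treatment.

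There is, however, one concrete flaw you should repair. Your containment step uses only unit Delaunay triangles: you define $UDel(V)$ as the Delaunay triangles all of whose edges have length at most $1$, observe (correctly) that their circumdisks are empty so they satisfy the $2$-localized property, and then require the short Delaunay path to use ``only short Delaunay edges lying in unit Delaunay triangles.'' That last requirement is not available: a Delaunay edge $uv$ with $\|uv\|\le 1$ need not belong to any unit triangle, since both adjacent Delaunay triangles may have an apex at distance greater than $1$ from $u$ or $v$. This is precisely why Definition~\ref{definition:klocalized} includes Gabriel edges as a separate clause, and your $UDel(V)\subseteq LDel^2(V)$ misses them. The standard repair proves the stronger statement that \emph{every} Delaunay edge of length at most $1$ lies in $LDel^2(V)$: if the disk with diameter $\overline{uv}$ contains no node, then $uv$ is a Gabriel edge and is in $LDel^2(V)$ directly; otherwise some node lies in that disk on one side of $uv$, and one argues via the adjacent Delaunay triangle on that side that its apex $w$ must itself lie in the diametral disk (else its empty circumdisk would cover that half of the diametral disk and swallow the witness node), whence $\angle(u,w,v)\ge 90^\circ$ forces $\|uw\|,\|wv\|\le\|uv\|\le 1$ and $uvw$ is a unit triangle with empty circumdisk. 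With that lemma in place of your containment, and with the (citable) fact that the $1.998$-path between the endpoints of a unit edge can be chosen to stay within the disk of diameter $\overline{uv}$, so that all of its edges have length at most $1$, your outline goes through.
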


\subsection{Online Routing in $2$-localized Delaunay Graphs} \label{section:onlineRoutingTwoLocal}
Finding paths in $2$-localized Delaunay Graphs that fulfill the spanning-property proven by Xia is only possible if knowledge about the entire graph is available.
Kuhn and Wattenhofer have proven that any routing strategy for $2$-localized Delaunay Graphs (or even for any type of graphs based on the Unit Disk links) which only considers the local
neighbors of each node cannot be constant-competitive for any constant $c$~\cite{Kuhn:2002:AOG:570810.570814}.
Bose et al.\ introduced the online routing strategy \emph{Chew's Algorithm} for Delaunay Graphs which only considers edges of triangles that are intersected by the direct line segment between source and destination~\cite{bez1990distributed}.

\begin{theorem}
	There exists an online routing strategy for Delaunay Graphs which finds a path between any source $s$ and target $t$ with length at most $5.9 \cdot  \|st\|$.
\end{theorem}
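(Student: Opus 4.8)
The plan is to analyze the online strategy of Bose et al.\ (Chew's Algorithm) directly on $\mathrm{Del}(V)$ and to bound its competitiveness by accounting for the detour incurred in each triangle crossed by the segment $\overline{st}$. First I would fix the algorithm precisely: starting at $s$, maintain the invariant that the current node lies on a Delaunay triangle whose interior is intersected by $\overline{st}$; among the (at most two) such triangles incident to the current node I would forward the message across the triangle edge that $\overline{st}$ crosses, advancing to whichever endpoint lies ahead with respect to the direction from $s$ to $t$. Since the triangles $T_1, T_2, \dots, T_k$ whose interiors meet $\overline{st}$ are linearly ordered (consecutive ones share an edge crossed by $\overline{st}$), this rule is well defined, makes monotone progress toward $t$, and therefore terminates at $t$. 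Establishing this ordering and that the rule never stalls is the routine but necessary first step.

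Next I would set up the length bound by decomposing $\overline{st}$ into sub-segments $\sigma_i = \overline{st}\cap T_i$, so that $\sum_i \|\sigma_i\| = \|st\|$, and attributing to each $T_i$ the portion of the routing path traversed while the message is ``inside'' $T_i$. The crucial geometric tool is the empty-circle property of the Delaunay Graph: for $T_i = \triangle(u,v,w)$ the circumscribing circle $\bigcirc(u,v,w)$ contains no further node of $V$. I would use this to control how far the vertices of $T_i$ can lie from the supporting line of $\overline{st}$ relative to $\|\sigma_i\|$, and hence to bound the length of the path segment inside $T_i$ by a factor times $\|\sigma_i\|$ depending only on the internal angles of $T_i$ and on whether the path stays on the same side of $\overline{st}$ or crosses it.

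Summing these per-triangle bounds over $i = 1, \dots, k$ then yields $\|(s,\dots,t)\| \le \rho \cdot \|st\|$, where $\rho$ is the supremum over all admissible triangle configurations of the per-triangle blow-up factor. The remaining work, and the genuinely hard part, is the optimization that pins this supremum down to $5.9$ rather than a weaker constant: one has to perform a tight case analysis on the relative placement of consecutive triangles and the orientation of their shared Delaunay edges, reduce the extremal configurations to a one-parameter family governed essentially by a single angle, and verify that maximizing the resulting expression over that parameter gives exactly $5.9$. The delicate point I expect to fight hardest with is ruling out that locally worst configurations can be chained adversarially so as to amplify the ratio beyond $5.9$; this requires an amortized argument showing that a triangle forcing a large local detour must ``consume'' a correspondingly large share of $\|st\|$, so that the global bound cannot exceed the single-triangle optimum.
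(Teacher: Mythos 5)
The first thing to note is that the paper contains no proof of this theorem at all: it is imported verbatim from the literature (Chew's Algorithm and the $5.9$ bound of Bose et al., cited as \cite{boseUpperAndLowerBounds}), so your attempt is a reconstruction of that external proof rather than something to be matched against an in-paper argument. Measured against the known analysis, your plan has a genuine gap, and it is not only the honestly flagged deferral of the constant. The central charging scheme is unsound as stated: you decompose $\overline{st}$ into $\sigma_i = \overline{st} \cap T_i$ and propose to bound the path portion attributed to $T_i$ by $\rho \cdot \|\sigma_i\|$, with $\rho$ the supremum of a per-triangle blow-up factor. That supremum is infinite. A long, thin Delaunay triangle can be clipped by $\overline{st}$ in an arbitrarily short sub-segment near one of its corners, while the routing path must traverse an edge of that triangle whose length is unrelated to $\|\sigma_i\|$; the empty-circumcircle property says the circle contains no further node of $V$, but it puts no upper bound on the circle's radius relative to $\|\sigma_i\|$. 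Hence no summation of per-triangle bounds can yield $5.9$, or any constant.

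The actual analyses (Chew's original argument and the $5.90$ bound for Delaunay triangulations) sidestep exactly this failure by charging the routed path not to the sub-segments $\sigma_i$ but to arcs of the circumcircles $C_1, C_2, \dots$ of the crossed triangles: each empty circle $C_i$ meets the line through $s$ and $t$ in a chord that is typically much longer than $\sigma_i$, the entry and exit points of consecutive circles advance monotonically along $\overline{st}$, and the total path length is bounded by accumulated arc length over the whole circle sequence --- a global amortization, not a local one. Your closing paragraph gestures at an amortized argument, but only as a patch to prevent adversarial chaining, which is in direct tension with your framing of $\rho$ as a single-triangle supremum; in the correct proof the amortization \emph{is} the argument, and the constant $5.9$ emerges from optimizing over worst-case circle sequences (including the side-switching behaviour of the path), none of which your plan carries out. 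Two smaller inaccuracies: the routing rule does not make monotone progress toward $t$ in Euclidean distance (only the index of the crossed triangle advances), and the precise tie-breaking in Chew's rule matters, since the competitive constant depends on which vertex of the next triangle is chosen, so "whichever endpoint lies ahead" is not a faithful statement of the algorithm being analyzed.
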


In case the source and the target node of the $2$-localized Delaunay Graph are \emph{visible} from each other, i.e., their direct line segment does not intersect any hole, Chew's Algorithm is also applicable.

\begin{theorem} \label{theorem:chewsAlgorithmViibleNodes}
	Let $s$ and $t$ be two visible nodes of a $2$-localized Delaunay Graph.
	Chew's Algorithm finds a path between $s$ and $t$ with length at most
	$5.9\|st\|$.
\end{theorem}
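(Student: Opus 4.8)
The plan is to reduce the claim to the unnumbered theorem that immediately precedes it, which already bounds Chew's Algorithm on a full Delaunay Graph by $5.9\|st\|$. The role of the visibility hypothesis is to force $LDel^2(V)$ to behave, along the segment $\overline{st}$, exactly like a genuine Delaunay triangulation, so that Chew's Algorithm executes the identical walk and inherits the same competitive factor.

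First I would use the fact that, by Definitions~\ref{definition:innerHole} and~\ref{definition:outerHole}, every face of $LDel^2(V)$ is either a triangle or a hole. Since $s$ and $t$ are visible, $\overline{st}$ meets no hole, so every face it crosses is a triangle. Consequently Chew's Algorithm is well defined here: starting from the triangle containing $s$ it passes through the sequence of triangles crossing $\overline{st}$ until the triangle containing $t$, inspecting only edges of intersected triangles, exactly as in the hole-free triangulated setting.

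The core step is to prove that each crossed triangle $\triangle(u,v,w)$ is a true Delaunay triangle, i.e.\ that $\bigcirc(u,v,w)$ contains no node of $V$ at all, not merely no node reachable within two hops. Suppose, toward a contradiction, it contained a node $x$. Then the Delaunay Graph subdivides the region of $\triangle(u,v,w)$ by edges that are absent from $LDel^2(V)$; but every Delaunay edge of length at most $1$ belongs to $LDel^2(V)$ (the unit Delaunay triangulation is a subgraph of $LDel^2(V)$, see~\cite{localDelaunay}), so any such subdividing edge has length greater than $1$. A missing long Delaunay edge is exactly what turns a collection of triangular faces into a hole, and this hole would be crossed by $\overline{st}$, contradicting visibility. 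Hence the crossed triangles carry the empty-circle property and coincide with the Delaunay triangles that $\overline{st}$ would cross in $\mathrm{Del}(V)$.

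Given this, the execution of Chew's Algorithm on $LDel^2(V)$ between $s$ and $t$ is identical to its execution on $\mathrm{Del}(V)$ — the same triangles in the same order, the same vertex choices — so the preceding theorem delivers a path of length at most $5.9\|st\|$. I expect the third paragraph to be the main obstacle: bridging the merely local $2$-localized property, which only empties the circumcircle of $2$-hop neighbours, and the global Delaunay property. The whole burden falls on arguing that a distant node inside $\bigcirc(u,v,w)$ would necessarily manifest as a hole along $\overline{st}$; making the ``missing long edge creates a crossed hole'' step precise, using planarity of $LDel^2(V)$ together with the containment of the unit Delaunay triangulation, is where the real work lies.
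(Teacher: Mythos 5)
Your proposal is correct and takes essentially the same route as the paper: the paper offers no standalone proof of Theorem~\ref{theorem:chewsAlgorithmViibleNodes}, and its implicit justification (in the proofs of Lemma~\ref{lemma:visibleDifferentConvexHull} and Theorem~\ref{theorem:convexHullPaths}(2)) is exactly your reduction --- for visible $s$ and $t$ the segment $\overline{st}$ intersects only triangles that are also part of the standard Delaunay Graph, so Chew's Algorithm behaves as on $\mathrm{Del}(V)$ and the $5.9\,\|st\|$ bound of \cite{boseUpperAndLowerBounds} applies. The geometric step you flag as the real work (that a distant node inside a crossed triangle's circumcircle would manifest as a hole intersecting $\overline{st}$) is asserted by the paper without proof, so your third paragraph is, if anything, more detailed than the paper's own treatment.
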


To be able to find constant-competitive paths between any pair of nodes in the $2$-localized Delaunay Graph, we take a look at results from computational geometry.
If we abstract from the underlying $2$-localized Delaunay Graph, our scenario is comparable to routing in polygonal domains.
These kinds of routing problems usually consider a starting point $s$ and a target point $t$ in the Euclidean plane.
The goal is to find a path in the plane from $s$ to $t$.
The challenging aspect of these problems is the presence of polygonal obstacles which avoid walking directly along the line segment $\overline{st}$.
In our scenario, these polygonal obstacles are radio holes.
De Berg et al.\ showed that it is enough to consider nodes of obstacle polygons for finding shortest paths in polygonal domains~\cite{computationalGeometryDeBerg}:

\begin{lemma} \label{lemma:polygonalPath}
	Any shortest path between $s$ and $t$ among a set $S$ of disjoint polygonal obstacles is a polygonal path whose inner nodes are nodes of $S$.
\end{lemma}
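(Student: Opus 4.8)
The plan is to prove that any shortest $s$-$t$ path among the disjoint polygonal obstacles in $S$ is polygonal and that all of its interior vertices are vertices of the obstacles. I would argue by a local perturbation (shortcut) argument: a shortest path must be locally taut, so wherever it is free to move it must be straight, and the only points where it can be forced to bend are at obstacle vertices.

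First I would establish the \emph{polygonal} part. Consider any shortest path $\pi$ from $s$ to $t$ that avoids the interiors of the obstacles in $S$ (such a path exists since the obstacles are disjoint and compact, so the free space is a reasonable metric space in which shortest paths are attained). Take any point $p$ in the relative interior of $\pi$ that does not lie on the boundary of any obstacle. Then $p$ has an open neighborhood $U$ contained entirely in the free space. Any two points of $\pi \cap U$ can therefore be joined by the straight segment between them without hitting an obstacle, and since the straight segment is the unique shortest connection in the plane, $\pi$ must coincide with that segment near $p$ (otherwise we could strictly shorten $\pi$, contradicting minimality). Hence $\pi$ is locally straight at every free point, which forces $\pi$ to be a finite union of line segments, i.e.\ a polygonal path.

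Next I would show that every \emph{bend} of $\pi$ occurs at an obstacle vertex. Let $p$ be an interior vertex of the polygonal path $\pi$, i.e.\ a point where the direction changes. By the previous step $p$ cannot lie in the free space, so $p$ lies on the boundary of some obstacle $O \in S$. It remains to rule out that $p$ is a relative interior point of an edge of $O$. If $p$ lay in the interior of an edge $e$ of $O$, then the boundary of $O$ is locally a straight segment through $p$, and the path approaches and leaves $p$ from the side of the free space. A short straight shortcut connecting a point of $\pi$ slightly before $p$ to a point slightly after $p$ would then either stay strictly on the free side of $e$ (if the bend points away from $O$) or could be reflected to stay on the free side; in either case it yields a strictly shorter path, again contradicting minimality. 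Therefore $p$ must be a vertex of $O$, which is a vertex of $S$.

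The main obstacle, and the step that requires the most care, is the second one: making the local shortcut argument fully rigorous near an obstacle boundary, since one must verify that the shortcut segment actually stays in the free space and that it is strictly shorter. This hinges on the convexity of the bend relative to the obstacle and on the disjointness of the obstacles (so that a shortcut avoiding $O$ cannot be blocked by a second obstacle in an arbitrarily small neighborhood). I would handle this by choosing the two shortcut endpoints close enough to $p$ that the neighborhood meets only $O$, and by distinguishing whether the path bends toward or away from $O$; the bend cannot point into $O$ (the path stays in free space), so the only taut configuration at an edge-interior point is a straight traversal, which is not a bend. This forces every genuine bend onto a vertex, completing the proof. Since the existence of a shortest path in this setting is standard, I would simply invoke compactness of the free space rather than reprove it.
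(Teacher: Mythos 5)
Your proof is correct and is essentially the standard argument: the paper itself gives no proof of this lemma, importing it directly from de Berg et al.\ (their Lemma~15.1), whose proof is exactly your local shortcut argument---a disc around any free interior point rules out non-straight pieces, and a half-disc shortcut on the free side of an obstacle edge rules out bends at edge-interior points, forcing all bends onto obstacle vertices. Your ``reflection'' remark is unnecessary (convexity of the free half-plane already guarantees the shortcut stays in free space), but the surrounding reasoning is sound and matches the cited proof.
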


The usual procedure for finding shortest paths in polygonal domains is the computation of a Visibility Graph and applying a single source shortest path algorithm (e.g., the algorithm of Dijkstra)~\cite{computationalGeometryDeBerg}.
In the Visibility Graph $Vis\left(V\right)$ of a set of polygons, $V$ represents the set of corners of the polygon, and there is an edge $\{v,w\}$ in $Vis\left(V\right)$ if and only if a line can be drawn from $v$ to $w$ without crossing any polygon, i.e., $v$ is visible from $w$.

The combination of Theorem~\ref{theorem:chewsAlgorithmViibleNodes} and Lemma~\ref{lemma:polygonalPath} implies that a shortest path in the Visibility Graph of hole nodes of the $2$-localized Delaunay Graph (which would be the shortest possible geometric connection between the source and the target node) yields to a $5.9$-competitive path in the $2$-localized Delaunay Graph by applying Chew's Algorithm between every pair of consecutive nodes on the path.

\section{General Routing Protocol} \label{section:generalRoutingProtocol}

In this section, we introduce a general routing strategy for $2$-localized Delaunay Graphs by using long-range links to exchange information about locations and shapes of holes.
We improve this strategy in Section~\ref{section:routingProtocolConvexHulls} with respect to storage requirements and distributed computation time.
Throughout this section, we assume that the $2$-localized Delaunay Graph is already correct.
For more details about establishing a $2$-localized Delaunay Graph of ad hoc links, we refer the reader to Section~\ref{section:adHocProtocol}.
Furthermore, we assume for now that every node which is located on the perimeter of a hole stores a Visibility Graph of all hole nodes.
In our scenario, two hole nodes are visible from each other if their direct line segment does not intersect any hole of the $2$-localized Delaunay Graph.
In the following, we show that this helps us to compute competitive paths for the $2$-localized Delaunay Graph by introducing a routing strategy that finds these paths.
Section~\ref{section:concreteProtocol} later deals with aggregating the information needed for the Visibility Graphs. \\
The routing protocol works as follows:
A source node $s$ that wants to send data to a target node $t$ initially contacts $t$ via a long-range link to ask for $t's$ geographical position, i.e., a tuple of coordinates $(t_x,t_y)$.
$t$ responds with its position and $s$ afterwards sends its message via Chew's Algorithm towards $(t_x,t_y)$.
We distinguish two cases:
\begin{enumerate}
	\item The message reaches $t$ via Chew's Algorithm
	\item The message reaches a hole node $h_0$, i.e., the direct line segment $\overline{st}$ intersects a hole
\end{enumerate}
In case ($1$), we immediately obtain a $5.9$-competitive path from $s$ to $t$.
Otherwise, $h_0$ inserts $t$ into its Visibility Graph and applies a shortest path algorithm from itself to $t$.
The resulting shortest path $\left(h_0, h_1, h_2, \dots, h_k = t\right)$ is then used to transmit the message via ad hoc links.
By applying Chew's Algorithm, a path of length $5.9 \cdot \|h_0h_1\|$ is obtained.
After reaching $h_1$, the procedure is repeated until the message finally reaches $t$.
Let $p_{st}$ be the shortest path between $s$ and $t$ in the Visibility Graph.
In case $h_0$ lies on the shortest path between $s$ and $t$ in the Visibility Graph, the resulting path in the $2$-localized Delaunay Graph has length at most $5.9 \cdot \|p_{st}\|$ (based on Lemma~\ref{lemma:polygonalPath} and Chew's Algorithm).
Otherwise, the initial path to $h_0$ is a detour.
Nevertheless, it can be easily seen that the detour increases the competitiveness only by a constant factor.
As Chew's Algorithm did not reach $t$ but a node $h_0$, it follows that the path taken from $s$ to $h_0$ has length less or equal to $5.9\|st\|$ which is less or equal to $5.9$ times the shortest possible path between $s$ and $t$ in the $2$-localized Delaunay Graph.
Hence, the detour increases the competitive constant only by an additional factor of $3$.
Consequently, we obtain an $17.7$-competitive path between $s$ and $t$. \\
There are, however, some drawbacks with respect to the storage capacity required at hole nodes.
Unfortunately, the nodes on the perimeter of a radio hole potentially have to store a huge Visibility Graph.
In fact, it is possible to have a radio hole in \localDel{} with $\Theta\left(n\right)$ nodes on its perimeter.
Also, if $h$ denotes the number of nodes on the perimeter of a hole, then the Visibility Graphs may contain up to $\Theta\left(h^2\right)$ edges.
An idea to reduce the number of edges to $\mathcal{O}\left(h\right)$ is to not compute the entire Visibility Graph but only a Delaunay Graph of all nodes lying on different holes.
As Delaunay Graphs are planar graphs, this reduces the number of edges to $\mathcal{O}\left(h\right)$.
However, this also affects the obtained length of the paths.
Delaunay Graphs do not contain the shortest geometric connection between two nodes in general but a path which is $1.998$-competitive to such a path~\cite{xiaDelaunaySpanner}.
Hence, by using a Delaunay Graph instead of a Visibility Graph, we obtain a path length of $1.998 \cdot 17.7 \cdot \|p_{st} \| \leq 35.37 \cdot \|p_{st} \|$.

\section{Routing Protocol for Convex Hulls as Hole Abstractions} \label{section:routingProtocolConvexHulls}
In Section~\ref{section:generalRoutingProtocol}, we highlighted the advantage of a Visibility Graph or a Delaunay Graph of all hole nodes to find $c$-competitive paths in the ad hoc network.
Nevertheless, the storage requirements for each hole node are linear in the total number of hole nodes.
A natural question is how to reduce the number of nodes in the Visibility Graph even further while still being able to compute competitive paths.
In Section~\ref{section:convexHullSpace}, we show that considering only convex hulls of holes reduces the space requirements significantly in case the convex hulls of holes do not intersect.
Therefore, we assume for the rest of the paper that there is no pair of intersecting convex hulls of holes.
Moreover, we analyze in Section~\ref{section:convexHullsCompetitive} that considering only convex hulls still allows us to find competitive paths between almost all source-destination pairs.
Based on these observations, we introduce a $c$-competitive routing strategy similar to our protocol of Section~\ref{section:generalRoutingProtocol} which considers only hole nodes which lie on convex hulls of holes in Section~\ref{section:convexHullsRoutingProtocol}.
The mentioned protocol, however, cannot deal with specific cases in which both the source and the destination lie inside the same area inside of a convex hull.
A solution to these cases is introduced in Section~\ref{section:dominatingSet}.
\subsection{Space Reduction} \label{section:convexHullSpace}
We can obtain a further space reduction if we focus on locally convex hulls of the radio holes.

\begin{definition}
	Let $\left(v_1, v_2, \ldots,v_k,v_1\right)$ be a cycle of nodes in \\
	\localDel{} at the perimeter of some hole. We call ($v_{i_1}, v_{i_2},$ \\ $\ldots,v_{i_\ell}, v_{i_1}$) for some $1\le i_1<i_2<\ldots,i_\ell \le k$ a
	{\em locally convex hull} of that hole if (1) $\|v_{i_j} v_{i_{j+1}}\| \le 1$ for all $j \in \{1,\ldots,\ell\}$ (where $v_{i_{\ell+1}}=v_{i_1}$), and (2) there are no 3 consecutive nodes $u,v,w$ in that sequence where $\angle\left(u,v,w\right)\ge 180^\circ$ and $\|uw\| \le 1$.
\end{definition}
\noindent
For the locally convex hulls it can be shown:

\begin{lemma} \label{lemma:firstRandom}
	For any cycle $\left(v_1, v_2, \ldots,v_k,v_1\right)$ of hole nodes in \localDel{} that covers an area of size $A$, any locally convex hull of that cycle contains $\mathcal{O}\left(A\right)$ nodes.
\end{lemma}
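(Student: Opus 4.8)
The plan is to prove the bound by a node-\emph{packing} argument driven by the bounded-degree hypothesis of Theorem~\ref{theorem:mainTheorem}, rather than by a purely metric count of hull edges. A purely metric argument cannot work, because condition~(1) only bounds the lengths of the hull edges from \emph{above}: consecutive hull vertices may be arbitrarily close, so the perimeter of the hull gives no upper bound on its number of vertices. (Indeed, if nodes could be packed arbitrarily densely, one could place $k$ nodes in convex position on a tiny circle; all of them would be vertices of a locally convex hull of a fixed-area face, and the statement would fail.) What rescues the bound is that bounded UDG-degree forbids such density: if a disk of radius $\frac{1}{2}$ contained more than $\Delta+1$ nodes, they would be pairwise within distance $1$ and hence form a UDG-clique of size $>\Delta+1$, contradicting bounded degree. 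So first I would record the packing estimate that every disk of radius $\frac{1}{2}$ contains $\mathcal{O}(1)$ nodes, whence any measurable region $R$ contains $\mathcal{O}\!\left(\mathrm{area}(R\oplus D_{1/2})\right)$ nodes, where $D_{1/2}$ is the disk of radius $\frac{1}{2}$.

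Next I would reduce the claim to counting \emph{all} nodes of the boundary cycle. Every vertex of a locally convex hull is by definition one of the cycle nodes $v_1,\dots,v_k$, and each $v_i$ lies on the boundary of the face, hence inside the closed region $R$ of area $A$ that the cycle bounds. Therefore it suffices to bound the number of cycle nodes lying in a bounded neighbourhood of $R$ by $\mathcal{O}(A)$; local convexity and condition~(2) are only used to guarantee that the hull is well defined and that its vertices are genuine cycle nodes, not for the count itself. To convert the packing estimate into a bound in terms of $A$ alone, I would invoke the \emph{fatness} of holes: since an inner hole is a face of $LDel^2(V)$ with at least $4$ nodes (Definition~\ref{definition:innerHole}), it cannot be crossed by a short Gabriel edge, and so it has no narrow neck but rather width $\Omega(1)$ everywhere. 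A fat region of area $A$ has perimeter $\mathcal{O}(A)$, and its thickening $R\oplus D_{1/2}$ still has area $\mathcal{O}(A)$; combined with the packing estimate this yields $\mathcal{O}(A)$ nodes in $R$, and a fortiori $\mathcal{O}(A)$ hull vertices.

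The main obstacle I anticipate is the fatness step: one must show rigorously that a non-triangular face of $LDel^2(V)$ cannot be thin, i.e.\ that two boundary nodes on opposite sides of the face at distance at most $1$ would, because the hole interior is empty, have an empty diameter disk and thus span a Gabriel edge (or bound a $2$-localized triangle) that subdivides the face, contradicting that it is a single hole. Making this quantitative---extracting a constant lower bound on the width and deducing both $\mathrm{perimeter}=\mathcal{O}(A)$ and $\mathrm{area}(R\oplus D_{1/2})=\mathcal{O}(A)$---is the technical heart of the argument, while the packing estimate and the reduction to cycle nodes are then routine. (If one exploited local convexity more heavily, the same ingredients would in fact give the stronger bound $\mathcal{O}(\sqrt{A})$ for a single convex-like boundary; but the weaker $\mathcal{O}(A)$ claimed here is all that is needed and is robust to non-convex, wiggly holes whose boundary length can be as large as $\Theta(A)$.)
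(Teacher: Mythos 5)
Your central misdiagnosis is the claim that ``a purely metric argument cannot work'' and that local convexity and condition~(2) are ``only used to guarantee that the hull is well defined.'' The paper's proof is precisely a short metric spacing argument, and it uses no bounded-degree hypothesis at all. For three consecutive hull nodes $u,v,w$: if $\angle(u,v,w)\ge 180^\circ$, condition~(2) of the definition of a locally convex hull directly forces $\|uw\|>1$; if $\angle(u,v,w)<180^\circ$, then $\|uw\|\le 1$ would cut $v$ off from the hole, contradicting that $v$ is a perimeter node; and no non-neighbouring hull node $x$ can satisfy $\|vx\|\le 1$ either, since that would be a shortcut in the perimeter. Hence every hull node has at most two other hull nodes (its hull neighbours) within unit distance, so taking every third hull vertex yields $\lceil \ell/3\rceil$ points at pairwise distance $>1$, and disjoint disks of radius $\frac{1}{2}$ around them give $\ell=\mathcal{O}(A)$. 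Your tiny-circle ``counterexample'' cannot occur on a hole perimeter: among $k$ pairwise-close nodes of one cycle, any non-consecutive pair at distance $\le 1$ is exactly such a forbidden shortcut, so a dense convex arc cannot consist of hole nodes of a single face --- this is the structure the paper exploits and your proposal discards. Relatedly, your reduction ``it suffices to bound the number of cycle nodes'' gives away the lemma's content: the paper notes in Section~\ref{section:generalRoutingProtocol} that a hole can have $\Theta(n)$ perimeter nodes, and even under bounded degree the number of cycle nodes scales with the perimeter length rather than with $A$; the point of the lemma is that the \emph{locally convex hull} is far smaller than the cycle.

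The step you yourself call the technical heart --- fatness --- is both unproven and, as stated, false. Your one-line justification (two close boundary nodes ``would, because the hole interior is empty, have an empty diameter disk'') is invalid: the diameter disk of such a pair protrudes outside the hole and may contain other boundary nodes, so emptiness of the face does not give emptiness of the disk; one needs at least a minimal-crossing-pair argument, which you do not supply. Worse, ``width $\Omega(1)$ everywhere'' fails for outer holes (Definition~\ref{definition:outerHole}), which the lemma also covers: a near-collinear chain of nodes sagging slightly below a convex-hull edge of length $L>1$ bounds an outer hole of width $\varepsilon$ and area roughly $\varepsilon L$, with perimeter roughly $2L\gg A$, so your chain ``width $\Omega(1)\Rightarrow \mathrm{perimeter}=\mathcal{O}(A)\Rightarrow \mathrm{area}\left(R\oplus D_{1/2}\right)=\mathcal{O}(A)$'' collapses on part of the lemma's scope, and your node count $\mathcal{O}\bigl(\mathrm{area}(R\oplus D_{1/2})\bigr)$ is then $\Theta(L)$, not $\mathcal{O}(A)$. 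To your credit, the instinct that a bare packing bound needs some control of the thickened region is fair --- the paper's own closing implication ``which implies that $\ell=\mathcal{O}(A)$'' is terse on exactly this point --- but your proposal replaces the paper's two-line spacing argument, which is the actual substance of the proof, with heavier machinery whose key lemma is missing and partly wrong, and which additionally imports the bounded-degree hypothesis of Theorem~\ref{theorem:mainTheorem} that the paper's proof of this lemma never uses.
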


\begin{proof}[Proof of Lemma~\ref{lemma:firstRandom}]
	Consider any locally convex hull $(v_{i_1}, v_{i_2},$ $\ldots,v_{i_\ell}, v_{i_1})$, and let $u,v,w$ be $3$ consecutive nodes in that sequence. I
	f $\angle\left(u,v,w\right) \ge 180^\circ$, then we know from the definition of the locally convex hull that $\|uw\|>1$. If $\angle(u,v,w)<180^\circ$, then $\|uw\|>1$ as well since otherwise $v$ would not be on the perimeter of the hole.
	This implies for the predecessor $p$ of $u$ and the successor $s$ of $w$ that $\|pv\|>1$ and $\|vs\|>1$.
	Also, there cannot exist any other node $x \in \{v_{i_1}, \ldots, v_{i_\ell}\}$ with $\|vx\|\le 1$ as otherwise we had a shortcut in the perimeter, meaning that $\left(v_1, v_2, \ldots,v_k,v_1\right)$ cannot be the perimeter of a hole.
	Hence, the unit cycle around each $v_{i_j}$ can contain at most 2 other nodes of the locally convex hull, which implies that $\ell =\mathcal{O}\left(A\right)$.
\end{proof}

\noindent
Hence, locally convex hulls contain a number of nodes that is independent of the total number of nodes in the system and only depends on the area covered by the hole.
A further reduction in the number of nodes can be achieved when only looking at the convex hull of a hole.

\begin{lemma} \label{lemma:randomSecond}
	For any cycle $\left(v_1, v_2, \ldots,v_k,v_1\right)$ of hole nodes in \localDel{} with a bounding box (i.e., the box of minimum size containing $v_1,\ldots,v_k$) of circumference $L$, the convex hull ($v_{i_1}, v_{i_2},\ldots,$ 
	$v_{i_\ell}, v_{i_1}$) of the cycle contains $\mathcal{O}\left(L\right)$ nodes.
\end{lemma}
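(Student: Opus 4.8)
The plan is to derive the bound from two ingredients: a bound on the perimeter of the convex hull, and a local sparsity bound showing that the hole nodes which survive on the convex hull cannot cluster. I would then charge the hull vertices to short arcs of the boundary.

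First I would control the perimeter. Since the convex hull $(v_{i_1},\dots,v_{i_\ell},v_{i_1})$ of the cycle is contained in its bounding box and both regions are convex, the perimeter of the convex hull is at most the perimeter of the bounding box, which is exactly $L$ (for two nested convex bodies the inner perimeter never exceeds the outer one). Hence the hull edges have total length at most $L$.

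Second --- and this is the heart of the argument --- I would show that any unit disk contains only a constant number of convex hull vertices. This is where the hole structure enters, mirroring the packing step in the proof of Lemma~\ref{lemma:firstRandom}. Fix a hull vertex $v$ and suppose another hull vertex $x$ lies within distance $1$ of it. The subpath of the cycle joining $v$ and $x$ either contains no further ring node, in which case $v$ and $x$ are neighbours on the ring and the convex angle at the vertex between them together with $\|vx\|\le 1$ would yield a shortcut chord, contradicting that the cycle bounds a single face of \localDel{} exactly as in Lemma~\ref{lemma:firstRandom}; or it dips into the hole, in which case the intermediate reflex ring nodes force thin triangles on the solid side of the boundary whose circumdisks cannot all stay empty once too many such vertices accumulate in a unit disk, violating the $2$-localized Delaunay property. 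Either way only $\mathcal{O}(1)$ hull vertices can lie within distance $1$ of $v$.

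Finally I would turn this local bound into the global count. Walking along the boundary of the convex hull and cutting it at arc-lengths $1,2,3,\dots$ partitions it into at most $\lceil L\rceil=\mathcal{O}(L)$ arcs, each of length at most $1$. All vertices lying on one such arc are pairwise within Euclidean distance $1$, hence lie inside a common unit disk, so by the sparsity bound each arc carries only $\mathcal{O}(1)$ hull vertices; summing over the $\mathcal{O}(L)$ arcs gives $\ell=\mathcal{O}(L)$. The step I expect to be the main obstacle is precisely the sparsity bound: unlike in Lemma~\ref{lemma:firstRandom}, consecutive hull vertices need not be within distance $1$, so one cannot simply reuse the locally convex hull definition, and the case in which the cycle dips into the hole between two nearby hull vertices must be ruled out through the empty-circle and single-face properties of the $2$-localized Delaunay graph rather than by the angle argument alone.
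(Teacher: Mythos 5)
Your counting skeleton is sound and in fact differs from the paper's: the paper never bounds the hull perimeter, but instead projects the hull vertices radially from the bounding-box center onto the box boundary, checks that this projection does not decrease distances (measured in $\ell_1$ along the box), and charges the box circumference $L$; you replace this with the standard monotonicity of perimeter under inclusion of nested convex bodies plus an arc decomposition. Both routes, however, must pass through the same key geometric fact, and that is exactly where your proposal has a genuine gap. Your claimed contradiction in the ring-adjacent case is wrong: two hull vertices $v,x$ that are neighbours on the ring satisfy $\|vx\|\le 1$ automatically (ring edges are edges of \localDel{} and hence have length at most $1$), and a short hull edge is perfectly consistent with all hypotheses of the lemma, so no ``shortcut chord'' can be extracted from it -- pairwise separation of hull vertices at scale $1$ is simply false (consider a hole whose boundary is itself convex, where every ring node is a hull vertex). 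The correct statement, and the one the paper's proof actually uses, concerns \emph{three} consecutive hull vertices $u,v,w$: since the interior angle at $v$ is strictly less than $180^\circ$, $\|uw\|\le 1$ would cut $v$ (and every ring node between $u$ and $w$) off from the hole face, so $\|uw\|>1$. Your fallback for the non-adjacent case (``thin triangles whose circumdisks cannot all stay empty once too many vertices accumulate'') is not an argument: no packing estimate or empty-circle computation is given, and it is aimed at the false pairwise claim rather than the triple one.

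The repair is straightforward, and it shows your full unit-disk sparsity claim is not even needed. The vertices lying on a single arc of your decomposition are \emph{consecutive along the hull}, since each arc is a connected piece of the hull boundary; if three consecutive hull vertices $u,v,w$ lay on one arc of length at most $1$, then $\|uw\|\le\|uv\|+\|vw\|\le 1$, contradicting the triple fact, so each arc carries at most two vertices and $\ell=\mathcal{O}(L)$. (You can even skip the arc decomposition: the triple fact gives $\|v_{i_j}v_{i_{j+1}}\|+\|v_{i_{j+1}}v_{i_{j+2}}\|>1$ for every $j$, so the hull perimeter exceeds $\lfloor \ell/2\rfloor$, and hull perimeter $\le L$ finishes immediately.) With that substitution your route becomes correct and genuinely different from the paper's projection argument -- arguably cleaner, since it avoids having to verify that the radial projection onto the bounding box is expansive with respect to the $\ell_1$ boundary distance.
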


\begin{proof}[Proof of Lemma~\ref{lemma:randomSecond}]
	Let $B$ be the bounding box of the cycle and $x$ be its center point. Let the points $w_{i_1},\ldots,w_{e_\ell}$ be the projections of $v_{i_1}, v_{i_2},\ldots,v_{i_\ell}$ from $x$ onto the boundary of $B$, i.e., the points where the ray from $x$ in the direction of $v_{i_j}$ intersects the boundary of $B$. As is easy to check, the $\ell_1$-distance of $w_{i_j}$ and $w_{i_{j+1}}$ on $B$ is at least as large as $\|v_{i_j} v_{i_{j+1}}\|$ for all $j$. Moreover, for any 3 consecutive points $u,v,w$ on the convex hull it must hold that $\|uw\|>1$. Hence, for any 3 consecutive points $u',v',w'$ on the projection of the convex hull onto $B$ it must hold that the $\ell_1$-distance of $u'$ and $w'$ is more than 1, which implies that the convex hull contains only $\mathcal{O}\left(L\right)$ nodes.
\end{proof}

\noindent
Since a bounding box of circumference $L$ may cover an area of size $\Theta\left(L^2\right)$, we may get another significant reduction in the number of nodes when switching from a locally convex hull to a convex hull.
All in all, by considering only convex hulls of holes, we achieve a significant reduction of the number of nodes contained in the Visibility Graph.
\subsection{$c$-competitive Paths via Convex Hulls} \label{section:convexHullsCompetitive}
In this section, we prove that considering only nodes of convex hulls of holes still allows us to find competitive paths in the $2$-localized Delaunay Graph.
For the moment, let us assume that both the source and the target of a routing request lie outside of any convex hull.
Moreover, we assume that the source and the target are not visible from each other as finding $c$-competitive paths for visible nodes can be found via Chew's Algorithm (see Section~\ref{section:onlineRoutingTwoLocal}).

\begin{lemma}\label{lemma:convexHullSufficient}
	The shortest path between any pair of non-visible nodes of the $2$-localized Delaunay Graph contains convex hull nodes.
\end{lemma}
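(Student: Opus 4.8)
The plan is to reduce the statement to a purely geometric fact about shortest paths among obstacles, and then invoke the standing assumption that convex hulls of holes are pairwise disjoint. By Lemma~\ref{lemma:polygonalPath}, the shortest path $p_{st}$ between the two non-visible nodes $s$ and $t$ is a polygonal path all of whose inner vertices are hole nodes. Since $s$ and $t$ are not visible, the straight segment $\overline{st}$ is blocked by some hole, so $p_{st}$ is not that segment and therefore has at least one inner vertex. Hence it suffices to prove the sharper statement that \emph{every} inner vertex of $p_{st}$ is in fact a convex hull node of the hole it belongs to; the existence of such a vertex then yields the claim.

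So I would fix an inner vertex $v$ of $p_{st}$ lying on a hole $h$, with path-neighbours $a$ (before) and $b$ (after). The key tool is that a shortest path is locally taut: if the segment $\overline{ab}$ were free, we could replace $a\to v\to b$ by $a\to b$ and shorten the path, contradicting optimality. Thus $h$ lies inside the turn at $v$. Now suppose, for contradiction, that $v$ is \emph{not} a convex hull node of $h$. Since the nodes are non-pathological (no three collinear), $v$ cannot lie in the relative interior of a hull edge, so $v$ lies strictly in the interior of $CH(h)$. Because $v\in\partial h\cap\operatorname{int}(CH(h))$, it sits at the bottom of a \emph{pocket}: a region of $CH(h)\setminus h$ bounded by a single hull edge $e$ (a \emph{bridge}, lying on a supporting line $\ell$ of $CH(h)$) and a concave chain of $\partial h$ containing $v$, with $v$ strictly on the hole side of $\ell$.

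The contradiction comes from a local shortcut along $e$. Any access to $v$ from the free space must pass through the mouth of the pocket, i.e.\ cross the bridge segment $e$. Let $a''$ be the last crossing of $\ell$ by $p_{st}$ before reaching $v$ and $b''$ the first crossing after $v$; both lie on $e$, and between them the path stays strictly on the hole side of $\ell$. Replacing this sub-path by the straight segment $\overline{a''b''}$ gives a strictly shorter path, since $v$ lies strictly off $\ell$ and $|a''v|+|v b''|>|a''b''|$ by the triangle inequality, contradicting optimality of $p_{st}$.

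The main obstacle is verifying that this shortcut $\overline{a''b''}$ stays in the free space. It avoids the interior of $h$ because $\overline{a''b''}\subseteq e\subseteq\partial CH(h)$, and the hull only touches $h$ along hull edges and vertices, never entering its interior. To rule out collisions with other holes I would invoke the standing assumption that the convex hulls of distinct holes are disjoint: since $\overline{a''b''}\subseteq CH(h)$ and $CH(h)\cap CH(h')=\emptyset$ for every other hole $h'$, the shortcut meets no other obstacle. The only remaining care is bookkeeping when a maximal excursion of $p_{st}$ inside $CH(h)$ visits several pockets; choosing $a''$ and $b''$ as the crossings of $\ell$ immediately surrounding the visit to $v$ localises the argument to a single pocket and keeps the replacement well defined. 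This establishes that no inner vertex of $p_{st}$ can be a non-hull hole node, so all inner vertices are convex hull nodes, and in particular $p_{st}$ contains convex hull nodes.
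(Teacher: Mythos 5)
Your proof is correct, but it takes a genuinely different route from the paper's. The paper argues by contradiction on the first hull edge $\overline{vw}$ crossed by $\overline{st}$: assuming the shortest path visits a pocket point $p$ strictly inside the convex hull, it combines the triangle inequality $\|sw\| \leq \|sp\| + \|pw\|$ with the $1.998$-spanner property of Delaunay graphs to conclude that routing via the hull endpoint $w$ is preferable. Strictly read, that chain of inequalities only shows that the spanning path to $w$ is at most $1.998$ times the path through $p$ --- i.e.\ it establishes that restricting to hull nodes is constant-competitive rather than that the literal shortest path avoids pocket vertices, so the paper's argument buys the routing consequence cheaply but at the cost of a spanner constant (and some rigor). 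Your argument instead works purely geometrically on the obstacle shortest path, which is the right object by Lemma~\ref{lemma:polygonalPath}: any visit to a pocket vertex $v$ forces the path to meet the bridge edge $e \subseteq \ell$ both before and after $v$ (the rest of the pocket boundary is obstacle boundary and cannot be crossed), and since $v$ lies strictly off the supporting line $\ell$ --- using non-pathology to exclude $v$ from the relative interior of a hull edge --- the straight shortcut along $e$ is strictly shorter, while the disjoint-hulls assumption keeps that shortcut in free space. This yields the sharper, exact statement that \emph{every} inner vertex of the shortest path is a hull vertex, with no constant-factor loss, which is precisely what justifies building the Visibility Graph and Overlay Delaunay Graph over hull nodes only. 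Two minor points of bookkeeping: $a''$ and $b''$ should be defined as the last/first points of the path on $\ell$ (touchings, including entry through an endpoint of $e$, not only transversal crossings), and note that your argument implicitly uses the section's standing assumption that $s$ and $t$ lie outside all convex hulls, which guarantees those touch points exist on both sides of $v$; both issues are resolvable exactly as your write-up sketches.
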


\begin{proof}[Proof of Lemma~\ref{lemma:convexHullSufficient}]
	Let $s,t$ be two nodes of the $2$-localized Delaunay Graph,
	whose direct line segment intersects a hole.
	Starting from $s$, let $\ell$ be the first intersected line segment of the boundary
	of the intersected convex hull with endpoints $v,w$.
	We assume that the shortest path contains points of $\left(v,...,w\right)$.
	Else, the argumentation must be repeated with the neighboring
	edges of the convex hull.
	
	By contradiction, we assume that the shortest path from $s$ to $t$
	contains a point $p\in \left(v,...,w\right)$ from the interior of the convex hull,
	i.e., excluding $v,w$.
	Without loss of generality, we assume that the shortest path furthermore contains the point $w$ (The same holds
	for $v$.).
	
	Because of the triangle inequality, the following holds:
	$\|sw\| \leq \|sp\| + \|pw\|$
	And we know that
	$\|\left(x,y\right)\| \leq 1.998\cdot \|xy\|$
	holds for any two nodes of a Delaunay triangulation. \\
	\hfill \\
	\noindent
	Then: \\
	\begin{align*}
	\frac{\|\left(s,w\right)\|} {1.998}  &\leq \|sw\| \\ &\leq \|sp\| + \|pw\| \\ &\leq \|\left(s,p\right)\| + \|\left(p,w\right)\| \\
	&= \|\left(s,... ,p,... ,w\right)\|
	\end{align*}
	\hfill \\
	\noindent
	Hence, points of the interior of convex hulls cannot be chosen as a path along a convex hull node would be shorter.
\end{proof}

\noindent
Using this observation, we show that a Delaunay Graph of all convex hull nodes helps to find competitive
paths in the $2$-localized Delaunay Graph.
We define the \emph{Overlay Delaunay Graph} to be a Delaunay Graph that contains all convex hulls of holes and connects the nodes of different convex hulls in a Delaunay Graph.

The following theorem is a conclusion of the so far mentioned properties:

\begin{theorem} \label{theorem:convexHullPaths}
	Let $s$ and $t$ be two nodes of a $2$-localized Delaunay Graph that do not lie inside of any convex hull.
	Further, let $\left(s = c_0, c_1,\dots ,c_{\ell-1},c_\ell = t\right)$ be the shortest path in the Overlay Delaunay Graph via long-range links. Then we have
	\begin{enumerate}
		\item There is a $\left(1.998 \cdot \sum_{m=0}^{\ell-1}d_m\right)$-path in the $2$-localized Delaunay Graph from $s$ to $t$, where $ d_m:=\| c_mc_{m+1}\|$.
		\item By applying Chew's Algorithm, we obtain a \\
		$\left(5.9 \cdot \sum_{m=0}^{\ell-1}d_m\right)$-path in the $2$-localized Delaunay Graph from $s$ to $t$, where $ d_m:=\| c_mc_{m+1}\|$.
	\end{enumerate}
\end{theorem}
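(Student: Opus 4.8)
The plan is to decompose the Overlay Delaunay path $(s = c_0, c_1, \dots, c_{\ell} = t)$ into its individual edges, route each edge $\overline{c_m c_{m+1}}$ separately inside the $2$-localized Delaunay Graph, and then concatenate the resulting subpaths and sum their lengths. Both claims then reduce to a per-edge guarantee multiplied over the $\ell$ hops. The entire argument hinges on one geometric lemma that I would prove first: \emph{every edge of the Overlay Delaunay path joins two nodes that are visible from each other}, i.e.\ the segment $\overline{c_m c_{m+1}}$ meets no hole. This is exactly the hypothesis needed to invoke the per-edge routing results, so it is the real content of the theorem.

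To establish this visibility lemma I would distinguish the two kinds of edges appearing in the overlay. If $c_m$ and $c_{m+1}$ are consecutive on the convex hull of the \emph{same} hole, then $\overline{c_m c_{m+1}}$ is a convex-hull edge; by convexity the hole lies in the interior of its own convex hull, so the segment does not cross that hole, and because we assume the convex hulls of distinct holes to be disjoint, the segment enters no other convex hull and hence crosses no other hole either. If instead $c_m$ and $c_{m+1}$ lie on \emph{different} convex hulls (this also covers the two end segments incident to $s$ and $t$, which lie outside all hulls), then $\overline{c_m c_{m+1}}$ is a genuine Delaunay edge of the overlay, and I would argue it cannot pass through any convex hull: if the segment crossed the convex hull of some hole $H''$, the vertices of $H''$ surrounding the crossing would intersect every circle through $c_m$ and $c_{m+1}$, contradicting the empty-circumcircle property of a Delaunay edge. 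I expect this inter-hull case to be the main obstacle, since it is the step that genuinely uses the disjointness assumption together with a careful empty-circle argument; the convex-hull-edge case is essentially immediate.

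With visibility of each overlay edge in hand, the two claims follow hop by hop. For claim (2) I would apply Chew's Algorithm to each visible pair $c_m,c_{m+1}$: by Theorem~\ref{theorem:chewsAlgorithmViibleNodes} it produces a path in the $2$-localized Delaunay Graph of length at most $5.9\,\|c_m c_{m+1}\| = 5.9\,d_m$, so concatenation yields total length at most $\sum_{m=0}^{\ell-1} 5.9\,d_m = 5.9\sum_{m=0}^{\ell-1} d_m$. For claim (1) I would instead use the existential spanner bound: since $c_m$ and $c_{m+1}$ are visible, the region around the segment is free of holes, so there the $2$-localized Delaunay Graph coincides with an ordinary Delaunay triangulation, and by Xia's bound~\cite{xiaDelaunaySpanner} it contains a path of Euclidean length at most $1.998\,\|c_m c_{m+1}\| = 1.998\,d_m$ between them. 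I would flag this hole-free-region step honestly, as it is the existential analogue of the visibility extension already used for Chew's Algorithm; summing over all edges then gives a path of length at most $1.998\sum_{m=0}^{\ell-1} d_m$.

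Finally I would note that the decomposition is legitimate precisely because $s$ and $t$ lie outside every convex hull and, by Lemma~\ref{lemma:convexHullSufficient}, the shortest connection runs over convex-hull nodes rather than tunnelling through a hull, so the overlay path is a well-defined sequence of visible hops. Everything beyond the per-edge visibility lemma is then routine bookkeeping of the constant factors $1.998$ and $5.9$, with the inter-hull empty-circle argument being the only delicate ingredient.
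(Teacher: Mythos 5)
Your overall architecture is the same as the paper's: decompose the overlay path into single hops, argue each hop joins visible nodes, apply Xia's $1.998$ spanner bound per hop for claim (1) and Chew's $5.9$ bound (Theorem~\ref{theorem:chewsAlgorithmViibleNodes}) per hop for claim (2), and sum. The paper packages exactly these per-hop bounds as Lemmas~\ref{lemma:convexHullSpanningPath}, \ref{lemma:adjacentConvexHull}, \ref{lemma:visibleDifferentConvexHull} and \ref{lemma:adjacentSameConvexHull}; your same-hull case (convexity of the hull plus disjointness of distinct hulls) matches the paper's Lemma~\ref{lemma:adjacentSameConvexHull}, and your ``hole-free region around a visible segment'' step for claim (1) is the same (equally informal) step the paper takes in Lemma~\ref{lemma:convexHullSpanningPath}. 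The one place you go beyond the paper is where your proposal breaks: the empty-circumcircle argument for inter-hull edges is false. Take a hole $H''$ whose convex hull has vertices $(0,1)$, $(0,-1)$, $(2,1)$, $(2,-1)$, and let $c_m=(-0.1,0)$, $c_{m+1}=(2.1,0)$. The circle centered at $(1,0)$ with radius $1.1$ passes through $c_m$ and $c_{m+1}$ and contains none of the four hull vertices (each lies at distance $\sqrt{2}>1.1$ from the center), so the pair satisfies the empty-circle property with respect to these points --- yet the segment $\overline{c_m c_{m+1}}$ passes straight through the hull. The flaw is that only the \emph{crossing point} of a traversed hull edge must lie inside every circle through $c_m$ and $c_{m+1}$; the endpoints of that hull edge can both lie outside, so no contradiction with Delaunayhood arises.

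What actually guarantees visibility of inter-hull hops is not a local circle argument but the construction of the Overlay Delaunay Graph itself: the hull edges are \emph{forced} into the graph (it ``contains all convex hulls of holes''), i.e.\ it is a constrained Delaunay structure, and planarity of that structure forbids any edge between nodes of different hulls from crossing a forced hull edge; together with the disjointness assumption, a segment that enters no convex hull meets no hole. (Indeed, in my example the long edge only survives in the unconstrained Delaunay triangulation because the hull edge $(0,1)$--$(0,-1)$ is itself not Delaunay there; constraining it removes the crossing edge.) The paper sidesteps this entirely by stating its per-hop lemmas with visibility as a \emph{hypothesis} on the overlay edges. So you should replace your empty-circle step by the constrained/planarity argument, or take visibility as given by the overlay construction as the paper implicitly does; with that repair, the rest of your write-up goes through hop by hop exactly as in the paper.
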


\noindent
Hence, we argue that this approach finds $c$-competitive paths from source $s$ to target $t$ in the $2$-localized Delaunay Graph. We will prove Theorem \ref{theorem:convexHullPaths}(1) with the following two lemmata:
\begin{lemma} \label{lemma:convexHullSpanningPath}
	Let $a$ and $b$ be visible nodes of different convex hulls. Then there is a $1.998 \cdot \|ab\|$-spanning path between them in the $2$-localized Delaunay Graph.
\end{lemma}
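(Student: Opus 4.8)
The plan is to exploit the visibility of $a$ and $b$ in order to reduce the claim to Xia's \emph{Euclidean} $1.998$-spanner bound for ordinary Delaunay graphs, rather than the weaker bound for the UDG-metric that \localDel{} satisfies in general. First I would record the structural consequence of visibility: since the direct line segment $\overline{ab}$ intersects no hole of \localDel{}, and every non-triangular face is (by Definition~\ref{definition:innerHole} and Definition~\ref{definition:outerHole}) a hole, the segment passes only through triangular faces. Listing these stabbed triangles $T_1,\dots,T_m$ in the order in which $\overline{ab}$ meets them, every edge of every $T_i$ is an edge of \localDel{} and consecutive triangles share an edge, so the $T_i$ form a triangulated channel joining $a$ to $b$ that lies entirely inside \localDel{}.

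Second, I would argue that along this hole-free channel \localDel{} actually realizes a genuine Delaunay triangulation of the vertex set $V'$ met by the channel. This is exactly the step where hole-freeness is used: the obstruction that keeps \localDel{} from being Euclidean-spanning is the presence of holes, and in their absence each stabbed triangle $\triangle\left(u,v,w\right)$ should be not merely $2$-localized but Delaunay, i.e.\ its circumcircle $\bigcirc\left(u,v,w\right)$ is empty. Granting this, the restriction of \localDel{} to the channel coincides with $\mathrm{Del}\left(V'\right)$, and I may apply to $\mathrm{Del}\left(V'\right)$ the Euclidean guarantee that there is a path from $a$ to $b$ of length less than $1.998\cdot\|ab\|$.

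Third, I would confine that path to the channel. Xia's construction of the $1.998\cdot\|ab\|$-path proceeds by charging against the triangles stabbed by $\overline{ab}$ and therefore yields a path using only edges of $T_1,\dots,T_m$; since all these edges belong to \localDel{}, the identical path is present in \localDel{}, which establishes the claimed $1.998\cdot\|ab\|$-spanning path between the two visible convex-hull nodes. (The online $5.9$ bound of Theorem~\ref{theorem:chewsAlgorithmViibleNodes} is not used here: the lemma is an \emph{existence} statement feeding part~(1) of Theorem~\ref{theorem:convexHullPaths}.)

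The hard part will be the second step, turning visibility into the statement that the stabbed triangles are genuinely Delaunay and that Xia's spanning path never leaves the channel. A stabbed triangle that is $2$-localized could in principle have a non-empty circumcircle if a distant, nearly collinear configuration produces a large circumradius, so I must show, using planarity of \localDel{} and the $2$-hop emptiness condition, that no node can be forced into $\bigcirc\left(u,v,w\right)$ inside a hole-free region. I would also have to treat the two boundary triangles $T_1$ and $T_m$ incident to $a$ and $b$ carefully, since Xia's bound is stated for a complete Delaunay triangulation of all points and the channel only triangulates a sub-polygon of the convex hull of $V'$; ensuring that the endpoints and the charging argument stay within the stabbed channel is the delicate point on which the whole reduction rests.
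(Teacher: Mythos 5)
Your proposal is correct and takes essentially the same route as the paper: the paper's proof likewise reduces the claim to Xia's Euclidean $1.998$-spanner bound for the ordinary Delaunay Graph, combined with the assertion that between a pair of visible nodes the $2$-localized Delaunay Graph contains all edges of the original Delaunay Graph, so that the spanning path transfers to \localDel{}. The step you single out as the hard part---showing that visibility forces the stabbed triangles to be genuinely Delaunay and that Xia's charging argument never leaves the stabbed channel---is exactly what the paper asserts in a single sentence without further justification, so your more cautious elaboration identifies a real gap in the published argument rather than deviating from it.
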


\begin{proof}[Proof of Lemma~\ref{lemma:convexHullSpanningPath}]
	Since the Delaunay Graph is a $1.998$-spanner of the complete Euclidean graph~\cite{xiaDelaunaySpanner} and the $2$-localized Delaunay Graph contains all edges of the original Delaunay Graph between a pair of visible nodes, there always exists a $1.998 \cdot \|ab\|$ path between two visible nodes $a$ and $b$ of two different convex hulls.
	This proves Lemma~\ref{lemma:convexHullSpanningPath}.
\end{proof}

\begin{lemma} \label{lemma:adjacentConvexHull}
	Let $a$ and $b$ be adjacent nodes on a convex hull, where $a\neq b$. Then there is a $1.998 \cdot \|ab\|$-spanning path in the $2$-localized Delaunay Graph between $a$ and $b$.
\end{lemma}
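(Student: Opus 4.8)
The plan is to reduce this statement to the situation already handled in Lemma~\ref{lemma:convexHullSpanningPath}. That lemma's argument in fact only uses that its two endpoints are \emph{visible} from each other: once visibility is established, the Delaunay Graph being a $1.998$-spanner of the complete Euclidean graph (Xia~\cite{xiaDelaunaySpanner}) yields a Delaunay path from $a$ to $b$ of Euclidean length at most $1.998\cdot\|ab\|$, and since $a$ and $b$ are visible this path consists of genuine short Delaunay edges that are also contained in the $2$-localized Delaunay Graph. Hence the entire proof boils down to showing that two adjacent convex hull nodes $a$ and $b$ are visible, i.e., that the open segment $\overline{ab}$ intersects no hole.

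First I would argue that $\overline{ab}$ avoids the hole $h$ to whose convex hull $C$ the edge $\{a,b\}$ belongs. Since $a$ and $b$ are consecutive vertices of $C$, the segment $\overline{ab}$ is an edge of $C$ and thus lies on the boundary $\partial C$. The hole region $h$ is contained in $C$ (a polygon is always contained in the convex hull of its vertices, and the vertices of $h$'s boundary are exactly the hole nodes), while its boundary touches $\partial C$ only at convex hull vertices; between two consecutive convex hull vertices the hole boundary dips strictly into the interior of $C$. Consequently no point of the open segment $\overline{ab}$ lies in $h$, so $\overline{ab}$ does not intersect $h$.

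Next I would rule out every other hole $h'$ using the standing assumption that distinct convex hulls do not intersect. If $\overline{ab}$ met $h'$, then, because $h' \subseteq C'$ for the convex hull $C'$ of $h'$, the segment would also meet $C'$; but $\overline{ab} \subseteq C$, so this would force $C \cap C' \neq \emptyset$, contradicting non-overlap. Therefore $\overline{ab}$ intersects no hole at all and $a,b$ are visible, which completes the reduction to the spanner argument of Lemma~\ref{lemma:convexHullSpanningPath}.

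The step I expect to require the most care is the claim that the short Delaunay path promised by the spanner bound actually lives in the $2$-localized Delaunay Graph even when $\|ab\|>1$ (the convex hull of a hole may well have edges longer than one transmission unit, so $a$ and $b$ need not be UDG-neighbors). Here visibility is exactly what saves us: a hole-free corridor around $\overline{ab}$ guarantees that the Delaunay edges realizing the $1.998$-spanner path are short and satisfy the $2$-localized property, so they are present in \localDel{}; this is the same implicit fact invoked in Lemma~\ref{lemma:convexHullSpanningPath}. I would also note that the non-pathological position assumption on $V$ prevents degenerate configurations in which the hole boundary could lie flush against $\overline{ab}$.
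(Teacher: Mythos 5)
Your proposal is correct (at the paper's own level of rigor) but takes a genuinely different route from the paper's proof of Lemma~\ref{lemma:adjacentConvexHull}. The paper disposes of the lemma in one line by invoking Xia's observation that \localDel{} is a $1.998$-spanner of the \emph{Unit Disk Graph}, i.e., it bounds the path length by $1.998$ times the UDG-distance between $a$ and $b$. You instead establish that two adjacent convex hull nodes are visible --- the open segment $\overline{ab}$ avoids its own hole because the hole polygon lies inside the hull and, by the non-pathological position assumption, can meet the hull edge only at $a$ and $b$, and it avoids all other holes by the standing non-intersecting-hulls assumption --- and then reuse the Euclidean Delaunay spanner argument of Lemma~\ref{lemma:convexHullSpanningPath}. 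Notably, your route is exactly the one the paper itself takes for the companion $5.9$-routing bound: the proof of Lemma~\ref{lemma:adjacentSameConvexHull} argues that adjacent convex hull nodes are ``per definition visible from each other'' under the non-intersection assumption, so your decomposition is consonant with the paper's architecture, just applied one lemma earlier. Your route also buys something concrete: the UDG-spanner invocation literally yields a path of length at most $1.998 \cdot d_{UDG}(a,b)$, and since $d_{UDG}(a,b) \ge \|ab\|$ this does not by itself give the Euclidean bound $1.998 \cdot \|ab\|$ that the lemma claims, whereas your visibility reduction delivers the stated bound directly (which is also the form needed in Theorem~\ref{theorem:convexHullPaths}(1)). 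Both your argument and the paper's rest on the same ingredient that is asserted rather than proved in the paper --- that the Delaunay spanner path between a pair of visible nodes consists of edges that are present in \localDel{} --- and you correctly flag this as the implicit fact inherited from Lemma~\ref{lemma:convexHullSpanningPath} rather than claiming to re-derive it.
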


\begin{proof}[Proof of Lemma~\ref{lemma:adjacentConvexHull}]
	We use the observation of Xia that a $2$-localized Delaunay Graph is a $1.998$-spanner of the Unit Disk Graph.
	Thus there is a $1.998$-competitive path between the to convex hull nodes.
	This proves Lemma~\ref{lemma:adjacentConvexHull}.
\end{proof}
\noindent
And to prove Theorem~\ref{theorem:convexHullPaths}(2), we consider the following lemmata:

\begin{lemma} \label{lemma:visibleDifferentConvexHull}
	Let $a$ and $b$ be visible nodes of different convex hulls. Then there is a $ 5.9 \cdot  \|ab\|$-routing path between them in the $2$-localized Delaunay Graph.
\end{lemma}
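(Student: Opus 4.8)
The plan is to obtain this statement as an immediate corollary of Theorem~\ref{theorem:chewsAlgorithmViibleNodes}, which already asserts that Chew's Algorithm finds a path of length at most $5.9\|st\|$ between \emph{any} two visible nodes $s,t$ of a $2$-localized Delaunay Graph. This lemma is simply the routing (online) analog of Lemma~\ref{lemma:convexHullSpanningPath}: there the $1.998$-spanner bound of Xia gave the \emph{existence} of a short path, whereas here Chew's Algorithm supplies an explicitly routable path at the cost of the larger constant $5.9$. So no genuinely new geometric argument is required; the work is entirely in checking that the hypotheses of the earlier theorem are met.

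First I would verify that $a$ and $b$ qualify as nodes of the $2$-localized Delaunay Graph to which the theorem applies. Each of them lies on the convex hull of some hole, hence each is a hole node, and hole nodes are by definition vertices of \localDel{}. Second, I would invoke the hypothesis that $a$ and $b$ are \emph{visible}, i.e., the segment $\overline{ab}$ crosses no hole of \localDel{}. This is precisely the visibility condition required for Chew's Algorithm in Theorem~\ref{theorem:chewsAlgorithmViibleNodes}. Applying that theorem to the pair $(a,b)$ then produces a path found by Chew's Algorithm of length at most $5.9\cdot\|ab\|$, which is exactly the claimed $5.9\cdot\|ab\|$-routing path.

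Since the statement is just an instantiation of a theorem already established in the excerpt, I do not expect a substantial obstacle. The only point deserving care is to confirm that the notion of visibility invoked here (the direct segment avoids every hole, including the two convex hulls on whose boundaries $a$ and $b$ lie) coincides with the visibility hypothesis of Theorem~\ref{theorem:chewsAlgorithmViibleNodes}. Because $a$ and $b$ are endpoints on their respective hull boundaries and $\overline{ab}$ meets no hole, the triangles of \localDel{} intersected by $\overline{ab}$ form an uninterrupted sequence, so Chew's Algorithm can traverse them without getting stuck and the theorem applies verbatim.
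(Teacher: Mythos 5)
Your proposal is correct and takes essentially the same route as the paper: the paper's proof cites the $5.9$-competitive bound for Chew's Algorithm from \cite{boseUpperAndLowerBounds} together with the observation that the segment between two visible nodes intersects only triangles that also belong to the standard Delaunay Graph, which is exactly the content of Theorem~\ref{theorem:chewsAlgorithmViibleNodes} that you invoke directly. The difference is packaging only, not substance.
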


\begin{proof}[Proof of Lemma~\ref{lemma:visibleDifferentConvexHull}]
	This fact follows immediately from \cite{boseUpperAndLowerBounds} and the fact
	that for two visible nodes $s$ and $t$, their direct line segment $\overline{st}$ intersects only
	triangles which are also part of the (standard) Delaunay Graph.
\end{proof}

\begin{lemma} \label{lemma:adjacentSameConvexHull}
	Let $a$ and $b$ be adjacent node on a convex hull, where $a\neq b$. Then there is a  $ 5.9 \cdot  \|ab\|$-routing path in the $2$-localized Delaunay Graph between $a$ and $b$.
\end{lemma}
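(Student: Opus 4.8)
The plan is to reduce this statement to the visible-node case that was already treated in Theorem~\ref{theorem:chewsAlgorithmViibleNodes} and Lemma~\ref{lemma:visibleDifferentConvexHull}. Concretely, I would first show that two adjacent nodes $a$ and $b$ of a convex hull are in fact visible from each other, and then simply invoke Chew's Algorithm: once visibility is established, the direct line segment $\overline{ab}$ intersects only triangles that also belong to the standard Delaunay Graph, exactly as in the proof of Lemma~\ref{lemma:visibleDifferentConvexHull}, so Theorem~\ref{theorem:chewsAlgorithmViibleNodes} immediately yields a routing path of length at most $5.9\|ab\|$. Thus the whole lemma boils down to establishing visibility of $a$ and $b$. (The degenerate subcase in which $\overline{ab}$ is simultaneously a perimeter edge gives a direct ad hoc link of length $\|ab\|\le 1$ and is trivial, so I would focus on the case where $\overline{ab}$ bridges a concavity.)

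To establish visibility, I would first deal with the hole $h$ whose convex hull carries $a$ and $b$. The key geometric observation is that $\overline{ab}$ is an edge of the convex hull of $h$, while the open face of $h$ is contained in that convex hull: every perimeter node of $h$ lies in the convex hull, hence so does every perimeter edge, and therefore the region enclosed by the perimeter, i.e.\ the hole itself, lies inside the convex hull. Since $\overline{ab}$ lies on the boundary of the convex hull, the only region it can border on its inner side is a \emph{bay}, that is, a pocket of the convex hull lying between $\overline{ab}$ and the perimeter subchain running from $a$ to $b$. Such a bay is disjoint from the open face of $h$, and the outer side of $\overline{ab}$ lies outside the convex hull altogether; hence $\overline{ab}$ does not cross $h$ and meets its closure at most in the endpoints $a$ and $b$.

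It then remains to rule out intersections of $\overline{ab}$ with any other hole $h'$, and here I would invoke the standing assumption that the convex hulls of distinct holes do not intersect. Since $\overline{ab}$ is contained in the convex hull of $h$, which is disjoint from the convex hull of $h'$, and since $h'$ itself is contained in its own convex hull, the segment $\overline{ab}$ cannot meet $h'$ either. Combining the two cases shows that $\overline{ab}$ intersects no hole, so $a$ and $b$ are visible and the lemma follows as described. I expect the main obstacle to be the geometric argument of the second paragraph, namely making rigorous the claim that a convex hull edge of a possibly non-convex hole never passes through the interior of that hole; the clean way to formalize this is to note that the difference between the convex hull of $h$ and $h$ itself decomposes into bays, each bounded by exactly one non-perimeter convex hull edge together with a perimeter subchain, so that every such edge borders only a bay and never the open face of $h$.
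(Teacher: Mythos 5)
Your proposal is correct and takes essentially the same approach as the paper: the paper's proof of Lemma~\ref{lemma:adjacentSameConvexHull} is a one-liner that reduces to Lemma~\ref{lemma:visibleDifferentConvexHull} by asserting that adjacent convex hull nodes are visible from each other under the non-intersecting-convex-hulls assumption. The only difference is that you actually supply the bay-area argument showing the hull edge $\overline{ab}$ cannot cross its own hole, a point the paper dismisses as holding ``per definition''.
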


\begin{proof}[Proof of Lemma~\ref{lemma:adjacentSameConvexHull}]
	The proof is the same as for Lemma~\ref{lemma:visibleDifferentConvexHull} because
	two adjacent convex hull nodes are due to the assumption of non-intersecting convex hulls per definition visible from each other.
\end{proof}

\noindent
Finally, we are able to prove Theorem~\ref{theorem:convexHullPaths}.

\begin{proof}[Proof of Theorem~\ref{theorem:convexHullPaths}(1)]
	Theorem~\ref{theorem:convexHullPaths}(1) follows immediately from Lemma~\ref{lemma:convexHullSpanningPath} and
	Lemma~\ref{lemma:adjacentConvexHull}.
\end{proof}

\begin{proof}[Proof of Theorem~\ref{theorem:convexHullPaths}(2)]
	Recall that there exists an online routing strategy for Delaunay Graphs which finds a path between any source $s$ and target $t$ with length at most $5.9 \cdot  \|st\|$.
	Furthermore, recall that the $2$-localized Delaunay Graph contains all edges of the original Delaunay Graph between any pair of visible nodes.
	Thus, there is a routing strategy from any convex hull node $a$ to any other convex hull node $b$ in cases $a$ and $b$ are nodes of different convex hulls with length at most $5.9 \cdot  \|ab\|$.
	This routing strategy can be applied to route in the $2$-localized Delaunay Graph between to adjacent convex hull nodes $a$ and $b$ as well. This is due to the fact, that the routing strategy chooses the path along triangles in the $2$-localized Delaunay Graph that are intersected by the line from $a$ to $b$.
	As $a$ and $b$ are visible from each other, these edges would also be part of the original Delaunay Graph.
	Thus, the routing strategy applied on the hybrid communication model gives a path of length at most $\left(5.9 \cdot \sum_{m=0}^{\ell-1}d_m\right)$.
	All in all, we obtain Theorem~\ref{theorem:convexHullPaths}.
\end{proof}

\subsection{Routing Protocol} \label{section:convexHullsRoutingProtocol}
This section deals with our routing protocol in the convex hull scenario.
Basically, we apply the same routing protocol as described in Section~\ref{section:generalRoutingProtocol} and instead of considering all hole nodes we only take those hole nodes which are also part of a convex hull into account.
To be precise, however, we have to consider more detailed cases concerning the positions of $s$ and $t$.
To investigate all different cases of the different geographical positions, we introduce \emph{bay areas}.
A bay area $H_A$ of a hole consists of the nodes and edges of the $2$-localized Delaunay Graph that are inside the convex hull and between two adjacent convex hull nodes.
For a visual intuition of bay areas, we refer the reader to Figure~\ref{figure:overview}.
The notion of bay areas allows us to formally describe each case we have to consider:

\begin{enumerate}
	\item $s$ and $t$ are outside of convex hulls
	\item $s$ or $t$ is inside of a convex hull
	\item $s$ and $t$ are inside different convex hulls
	\item $s$ and $t$ are inside the same convex hull but in different bay areas
	\item $s$ and $t$ are inside the same convex hull and in the same bay area.
\end{enumerate}

Case $1$ is solvable with few additional requirements to the routing protocol described in Section~\ref{section:generalRoutingProtocol}.
Cases $2-5$, however, need a more sophisticated solution and are postponed to Section~\ref{section:dominatingSet}.

For Case $1$, we assume for now that the following information is available:

\begin{enumerate}
	\item Each node located on the perimeter of a hole stores references to its two neighboring convex hull nodes
	\item All nodes lying on convex hulls of holes store an Overlay Delaunay Graph of all convex hull nodes
\end{enumerate}

The concrete routing protocol for Case $1$ works exactly as described in Section~\ref{section:generalRoutingProtocol}.
A node $s$ sends its message via Chew's Algorithm into the direction of $t$.
In case the message arrives at a hole node, it is directed to a convex hull node.
The convex hull node inserts $t$ into its Visibility Graph and applies a shortest path algorithm.
The resulting path is added to the message and used for forwarding the message in the ad hoc network.
Between any pair of nodes on the received path, Chew's Algorithm is applied.
Based on the results from Section~\ref{section:convexHullsCompetitive}, we obtain a $c$-competitive path in \localDel{}.

\subsection{Limitations of Convex Hulls} \label{section:dominatingSet}
The routing algorithm of Section~\ref{section:convexHullsRoutingProtocol} produces $c$-competitive paths between any pairs $(s,t)$, where the geographical coordinates of $s$ and $t$ are outside of convex hulls (Case $1$).
In this section, we concentrate on routing from $s$ to $t$, when their geographical coordinates fulfill the properties of Cases $2$-$5$. 
Here, we only provide the routing algorithm, where both $s$ and $t$ are in the same bay area, i.e., Case $5$.
It will be easy to see that an analogous routing can be executed for Cases $2$-$4$.\\
For computing $c$-competitive paths, we assume that a dominating set of all hole nodes in this bay area is known to each of these hole nodes.
A dominating set $DS$ of a graph $G=(V,E)$ is a subset of $V$ such that every node not in $DS$ is adjacent to at least one node of $DS$. To calculate $DS$, we refer to Section~\ref{section:incover}. \\

Recall that $\overline{s\,t}$ denotes the direct line segment between $s$ and $t$.
We define $S$ to be the first intersection point between $\overline{s\,t}$ and the hole boundary, from the direction of $s$.
Let $T$ be the analogous intersection point from the direction of $t$.
Let $P_1$ be the dominating set node with the shortest hop distance on the hole boundary to $S$ and $P_t$ the analogous dominating set node to $T$.
We denote $H_{s,t}$ to be the set of all hole nodes that are located in this bay area between $P_1$ and $P_t$.
We call the nodes of the convex hull of this set the \emph{extreme points} $\lbrace E_1,...,E_k\rbrace$.
We define $E_t$ to be the extreme point with the smallest index, where $\overline{E_t\, t}$ is visible to $t$. \\
The routing strategy works as follows:\\
$s$ executes Chew's Algorithm to send the message $m$ in the direction of $t$ until $m$
either arrives at $t$ (i.e., $s$ and $t$ are visible to each other) or at $P_1$.
If it reaches $P_1$, then $m$ is routed from $P_1$ to $E_1$, from $E_1$ to $E_2$,..., from $E_i$ to $E_t$, for $i=1,\dots,t$. Finally $m$ is routed from $E_t$ to $t$. All these routing steps are done with Chew's Algorithm. \\
Because Chew's Algorithm is $5.9$-competitive and the provided path by the algorithm contains in total $2+\vert E_{route}\vert$ direct lines, where $\vert E_{route}\vert$ denotes the number of extreme points that we route to, it is easy to see:
\begin{lemma}
	Let $s$ and $t$ be nodes with geographic coordinates in the same bay area, then the routing algorithm above provides a $c$-competitive routing path between $s$ and $t$ with $c=(2+\vert E_{route}\vert)\cdot 5.9$.
\end{lemma}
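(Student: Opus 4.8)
The plan is to bound the routed path one piece at a time and then add up the bounds. First I would note that the constructed path $(s, P_1, E_1, E_2, \ldots, E_t, t)$ is a concatenation of exactly $2 + |E_{route}|$ straight-line pieces: the opening piece $\overline{s P_1}$, the $|E_{route}|$ pieces $\overline{P_1 E_1}, \overline{E_1 E_2}, \ldots, \overline{E_{t-1} E_t}$ that reach the extreme points we route to, and the closing piece $\overline{E_t t}$. The two endpoints of each piece are visible to one another — $E_t$ is chosen precisely so that $\overline{E_t t}$ is unobstructed, consecutive extreme points of a convex hull see each other, and $s$ reaches $P_1$ along an unobstructed initial stretch of $\overline{st}$ — so Theorem~\ref{theorem:chewsAlgorithmViibleNodes} applies and Chew's Algorithm connects the endpoints of each piece by an ad hoc path of length at most $5.9$ times the Euclidean length of that piece.

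The core of the argument is then the claim that every one of these $2 + |E_{route}|$ pieces has Euclidean length at most $d(s,t)$. To establish this I would exploit that $s$ and $t$ are \emph{not} visible, so $\overline{st}$ is blocked and every $s$-$t$ path in $\mathrm{UDG}(V)$ — in particular a shortest one, of length $d(s,t)$ — must detour around the hole. Any such path is an obstacle-avoiding polygonal path, so by Lemma~\ref{lemma:polygonalPath} the shortest obstacle-avoiding path $\pi^\ast$ has length $L^\ast \le d(s,t)$ and runs from $s$ to $t$ through extreme (convex-hull) vertices of the hole lying in the bay area. Since the endpoints of each piece lie on this detour, the straight-line distance between them is at most the arc length of $\pi^\ast$ between them, hence at most $L^\ast \le d(s,t)$.

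Combining the two observations, the total length of the routed path is
\[
\sum_{\text{pieces}} 5.9 \cdot \|\text{piece}\| \;\le\; (2 + |E_{route}|)\cdot 5.9 \cdot d(s,t),
\]
which is precisely the claimed $c$-competitiveness with $c = (2 + |E_{route}|)\cdot 5.9$.

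I expect the delicate point to be the geometric claim of the second paragraph, and in particular the fact that the algorithm routes through \emph{all} extreme points $E_1, \ldots, E_t$, whereas the geodesic $\pi^\ast$ may visit only a subset of them. For an extreme point skipped by $\pi^\ast$ one must still argue that the corresponding chord stays \emph{between} $s$ and $t$ along the detour, so that its length remains bounded by what any $s$-$t$ path needs in order to circumvent the hole; turning this intuition into a bound valid for every single piece — not only for those lying on $\pi^\ast$ — is the step that demands care.
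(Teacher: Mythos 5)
Your first paragraph — the decomposition of the route into $2+\vert E_{route}\vert$ straight-line pieces, each traversed by Chew's Algorithm at a factor of $5.9$ via Theorem~\ref{theorem:chewsAlgorithmViibleNodes} — is exactly the paper's argument; indeed the paper says nothing more than this ("the provided path \ldots contains in total $2+\vert E_{route}\vert$ direct lines \ldots it is easy to see"), so up to that point you have reproduced the published proof in full. The problem is your second paragraph, which is precisely the step the paper leaves implicit, and your argument for it does not go through. You assert that "the endpoints of each piece lie on this detour" $\pi^\ast$, and this is false in general. First, $P_1$ is a dominating-set node chosen by \emph{hop distance} along the hole ring to the intersection point $S$; nothing forces it to be a vertex of the convex hull of $H_{s,t}$, let alone a vertex of the geodesic, so even the piece $\overline{P_1E_1}$ is not covered. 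Second, and more importantly, the shortest obstacle-avoiding path from $s$ to $t$ is a taut ``funnel'' path: by the very structure guaranteed by Lemma~\ref{lemma:polygonalPath} it visits only the contiguous subchain of extreme points lying between the two tangent points as seen from $s$ and from $t$, and it skips every extreme point of the hull of $H_{s,t}$ facing away from $\overline{st}$. For a chord $\overline{E_iE_{i+1}}$ between two skipped extreme points, the premise that both endpoints lie on $\pi^\ast$ fails, so the chain of inequalities $\|E_iE_{i+1}\| \le L^\ast \le d(s,t)$ is simply not available. Your closing paragraph concedes exactly this, which means the central claim of the proof — \emph{every} piece has Euclidean length at most $d(s,t)$ — is established only for pieces on the geodesic and left open for the rest. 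As a self-contained proof, the proposal therefore has a genuine gap at its core step; flagging a hole is not the same as closing it.

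To be fair, the paper itself supplies no argument for this step either, so your attempt is at least as rigorous as the published one and more candid about where the difficulty sits. A repair would have to bound the skipped chords directly rather than through membership in $\pi^\ast$: for instance, each chord between consecutive hull vertices of $H_{s,t}$ is at most the diameter of that hull, and the hull's extent is controlled by how far the blocking chain protrudes past $\overline{st}$, which in turn lower-bounds $d(s,t)$ because every $s$--$t$ path in the bay area must wrap around that protrusion. Carried out, this kind of argument yields $\|E_iE_{i+1}\| = O(d(s,t))$ per piece, but not obviously with constant $1$, so even a correct completion may only give $c = O\left(\left(2+\vert E_{route}\vert\right)\cdot 5.9\right)$ rather than the exact constant claimed in the lemma — a discrepancy worth noting against the paper's statement as well.
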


\section{Concrete Protocol} \label{section:concreteProtocol}

This section deals with collecting all information needed for the protocols described in Sections~\ref{section:generalRoutingProtocol} and \ref{section:routingProtocolConvexHulls} in a distributed manner.
The following issues have to be discussed:

\begin{enumerate}
	\item Distributed Construction of the $2$-localized Delaunay Graph
	\item Hole Detection
	\item Distributed computation of convex hulls
	\item Distribution of convex hull information to compute an Overlay Delaunay Graph
	\item Computation and Distribution of the Dominating Set of the hole ring in each bay area
\end{enumerate}

After points $1-5$ are solved, we are able to apply the routing strategies of Sections~\ref{section:generalRoutingProtocol} and \ref{section:routingProtocolConvexHulls} and obtain $c$-competitive paths in a completely distributed fashion. \\
The rest of the section is structured as follows:
Section~\ref{section:adHocProtocol} deals with the distributed construction of $2$-localized Delaunay Graph.
Afterwards, we describe a preprocessing strategy for the convex hull protocol, which transforms a ring of nodes into a hypercube.
The hypercube protocol is introduced in Section~\ref{section:hypercubeProtocol}.
Section~\ref{section:convexHullProtocol} introduces a protocol that computes convex hulls of all holes.
We continue with discussing hole detection, i.e., how nodes can detect if they are hole nodes in Section~\ref{section:holeDetection}.
Subsequently, the protocol for the distribution of convex hull information is introduced in Section~\ref{section:convexHullDistribution}.
Section~\ref{section:incover} deals with the computation of a Dominating Set along the hole ring in each bay area.

\subsection{Ad Hoc Network Construction} \label{section:adHocProtocol}
In the following, we discuss the distributed construction of a $2$-localized Delaunay Graph.
For the $2$-localized Delaunay Graph, we use the distributed protocol described in~\cite{localDelaunay}.
In their work, it is assumed that an initial (connected) Unit Disk Graph of all ad hoc links is given.
This can be trivially achieved if every node executes a WiFi-broadcast within its transmission range in a short setup-phase.
Afterwards, each node is aware of all nodes in its transmission range and we obtain a Unit Disk Graph.
As we cannot solve path finding in unconnected Unit Disk Graphs, we assume that the initial Unit Disk Graph is connected. \\
After all Unit Disk-links are known, the nodes execute the protocol of Li et al.\, which requires communication costs of $\mathcal{O}\left(n \log n\right)$ bits and only $\mathcal{O}(1)$ communication rounds~\cite{localDelaunay}.
The result is, to be precise, not a $2$-localized Delaunay Graph but a supergraph of it called \emph{Planar Localized Delaunay Graph}.
As each edge has a length of at most $1$ and the Planar Localized Delaunay Graph is a planar graph, our ideas of hole detection also work for these type of graphs.
For convenience, we restrict ourselves to $2$-localized Delaunay Graphs in the rest of this section.	\\

\subsection{Hypercube Protocol for a Ring of Nodes} \label{section:hypercubeProtocol}
In this section, we describe a procedure that establishes a hypercube topology out of a ring with $k$ nodes.
On the one hand, this protocol is a prerequisite for the convex hull protocol.
On the other hand, this protocol allows a fast hole detection, i.e., enables nodes to quickly distinguish the outer boundary from a hole.
More precisely, we execute the protocol both for holes and the outer boundary of the entire node set which are both connected in a ring topology.
For the ease of notation, we summarize nodes of the outer boundary and hole nodes as \textit{boundary nodes}.
Note that each node can locally detect whether it part of an inner or outer hole by checking whether it is part of a triangle with a missing edge due to the restriction of the edge length (see Definition~\ref{definition:klocalized}).
Each node $v$ which is part of the convex hull of the entire node set detects that there are two consecutive neighbors $v$ and $w$ in the clockwise ordering of $v's$ neighbors such that $\angle\left(u,v,w\right)\ge 180^\circ$. \\
Initially, each boundary node chooses a successor and a predecessor in each ring.
This can be achieved as follows:
Each boundary node sorts its boundary neighbors clockwise.
Afterwards, for every pair of consecutive nodes in the sorting (also for the last and the first node) the first node is chosen as predecessor and the second node is chosen as successor.
Now, every boundary is either oriented clockwise or counterclockwise.
More precisely, the outer boundary is oriented clockwise and each hole is oriented counterclockwise.
The orientation, however, is not important for the hypercube protocol but for the hole detection in Section~\ref{section:holeDetection}.\\
We proceed with the hypercube protocol by giving a definition of a hypercube.

\begin{definition}
	A $d$-dimensional \emph{hypercube} consists of $n$ nodes, where $n=2^d$, such that each node has a unique bitstring $(x_1, \dots, x_d)$ $\in \{0,1\}^d$ and there is an edge between two nodes if and only if their bitstring differs in only one bit.
	The decimal representation of a bitstring of a node $h$ is denoted as $id(v)$.
\end{definition}

For simplicity, we assume the number of nodes in the ring to be a power of two.
However, the techniques can be applied for an arbitrary number of nodes with a slight modification of the given protocol.
For the construction of the hypercube we use pointer jumping.
On the one hand, this technique enables us to build overlay edges for the hypercube fast and additionally it allows us
to elect a leader in $\mathcal{O}(\log k)$ communication rounds which is responsible for setting up the hypercube IDs.
The leader of the ring is the node with minimal ID.
The ID of a node $v$ is denoted as $id_v$.
In addition, we assign two values to each edge $e = \{u,v\}$, which is created by the pointer jumping protocol.
The first one, $\ell(e)$ defines the minimal ID of all ring nodes which are bridged by $e$, \emph{except} $id_u$.
The second value, $\mathrm{level}(e) = \log(b)$, where $b$ denotes the number of ring nodes between $u$ and $v$. \\
The pointer jumping is used as follows:
Let $v$ be a node of the hole ring and let $pred_0$ be its predecessor and $succ_0$ its successor on the ring.
In round $1$ of the protocol, $v$ introduces $succ_0$ to $pred_0$ to each other.
Thus $succ_0$ and $pred_0$ become adjacent nodes and an overlay edge $e = \lbrace pred_0, succ_0 \rbrace$ is established.
Further, $v$ assigns $\ell(e) = \min\{id_v,id_{succ_0}\}$ and $\mathrm{level}(e) = 0$.
As each node executes the protocol, $v$ also gets introduced two nodes in round $1$ which are denoted as $pred_1$ and $succ_1$.
In particular, in round $i$, each node $v$ of the hole ring introduces its predecessor $pred_{i-1}$ to its successor $succ_{i-1}$ and gets introduced $pred_i$ and $succ_i$.
The node $v$ that introduces $pred_{i-1}$ and $succ_{i-1}$ to each other also assigns $\ell(\lbrace pred_{i-1}, succ_{i-1} \rbrace) = \min\{\ell(\{pred_{i-1},v\}), \ell(\{v,succ_{i-1}\}) \}$  and $\mathrm{level}(e) = \mathrm{level}(\{pred_{i-1},v\}) +1$.  \\
With pointer jumping, the hop distance between any pair of nodes halves from round to round.
The protocol stops in a round $i$ in which $v$ gets introduced $succ_i$ and $pred_i$ and $\ell(\{pred_{i},v\}) = \ell(\{v, succ_i \})$.
At that point, each node (especially the leader itself) is locally aware of the minimal ID and hence knows the ID of the leader.
As the distance between any pair of nodes halves from round to round, this protocol requires $\mathcal{O}(\log k)$ communication rounds.
\noindent
For the purpose of being able to emulate a hypercube, we do not only need the additional overlay edges, but also hypercube IDs.
Recall that the node IDs of the hypercube are bitstrings of length $\log k$.
To distribute the hypercube IDs to the corresponding boundary nodes, the leader $v$ assigns for each hypercube edge $\{v, succ_i\}$ the binary representation of $\mathrm{level}(\{v, succ_i\})+1$ as ID to $succ_i$.
Each node that receives an ID from the leader repeats the ID distribution recursively, relative to its own ID.
As the diameter of a hypercube of $k$ nodes is $\mathcal{O}(\log k)$, the distribution of IDs requires $\mathcal{O}(\log k)$ communication rounds.
Eventually, the nodes of the ring form a hypercube and we are able to apply every protocol designed for hypercubes. \\
We summarize the results of this section in the following lemma:

\begin{lemma}
	A ring of $k$ nodes can be transformed into a hypercube in $\mathcal{O}(\log k)$ communication rounds.
	The number of required messages is in $\mathcal{O}(\log k)$ per node.
\end{lemma}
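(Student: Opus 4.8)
The plan is to establish both claims---the $\mathcal{O}(\log k)$ round bound and the $\mathcal{O}(\log k)$ per-node message bound---from the pointer jumping construction described above, by isolating an invariant on the overlay edges and then accounting for the messages round by round.

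First I would prove the central invariant by induction on the round number $i$: after round $i$ has completed, every ring node $v$ is joined by an overlay edge to the two ring nodes lying at hop-distance exactly $2^i$ from $v$ (clockwise and counterclockwise). The base case $i=0$ is the ring itself. For the inductive step, the node $v$ that holds $pred_{i-1}$ and $succ_{i-1}$---each at hop-distance $2^{i-1}$ from $v$ by the induction hypothesis---introduces them to one another, producing an edge that spans hop-distance $2^{i-1}+2^{i-1}=2^i$. Since the hop-distance across each freshly created edge doubles every round, after $\lceil \log k \rceil$ rounds every node possesses shortcut edges to all power-of-two hop-distances, which is precisely the connectivity needed to emulate a hypercube. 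This simultaneously yields the round bound for the edge-building phase.

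Second, I would verify leader election and termination. The label $\ell(e)$ is defined to carry the minimum ID among the ring nodes bridged by $e$; by a second induction matching the first, after round $i$ the label $\ell(\{v,succ_i\})$ equals the minimum ID over the block of $2^i$ consecutive nodes starting at $v$. Hence once the doubling distance reaches $k$, the labels on both sides of each node coincide and equal the global minimum, so the termination test $\ell(\{pred_i,v\})=\ell(\{v,succ_i\})$ fires and every node---including the leader---learns the leader's ID within $\mathcal{O}(\log k)$ rounds. For the ID-assignment phase I would use that the overlay built above has diameter $\mathcal{O}(\log k)$ (it contains the power-of-two shortcuts), so the recursive, level-indexed distribution of hypercube IDs starting at the leader terminates in $\mathcal{O}(\log k)$ additional rounds.

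Finally, the message count follows by bookkeeping: in each round a node issues only the constant number of introductions prescribed by the protocol (introducing its current predecessor to its current successor, and being introduced to one new predecessor and successor), and likewise forwards $\mathcal{O}(1)$ labels and ID messages, so over the $\mathcal{O}(\log k)$ rounds of both phases each node sends $\mathcal{O}(\log k)$ messages in total. The step I expect to be the main obstacle is making the correspondence between the pointer-jumping overlay and a genuine hypercube precise: the shortcut edges connect ring positions $p$ and $p\pm 2^i$, whereas hypercube edges connect labels differing in a single bit, so I would have to argue carefully that the recursive ID assignment induces a labeling under which each required hypercube edge is realized by an overlay edge, and to spell out the ``slight modification'' needed when $k$ is not a power of two (for instance, rounding up to the next power of two and letting the surplus positions be simulated by existing nodes).
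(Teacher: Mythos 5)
Your proposal is correct and follows essentially the same route as the paper: the paper's justification for this lemma is precisely the pointer-jumping analysis you formalize (the distance bridged by overlay edges doubles each round, so the $\ell$-label termination test and leader election complete in $\mathcal{O}(\log k)$ rounds, after which hypercube IDs are distributed recursively along the logarithmic-diameter overlay, with a constant number of messages per node per round). Your two inductions merely make this explicit, and the obstacle you flag---matching position-difference edges $p \pm 2^j$ to single-bit-flip hypercube edges, plus the non-power-of-two case---is exactly the point the paper also leaves informal, so you have lost nothing relative to its argument.
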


\subsection{Convex Hull Computation} \label{section:convexHullProtocol}
In the previous section, we presented the protocol to establish the hypercube of a ring of nodes.
We proceed with introducing a protocol that computes a convex hull of a ring of nodes that uses the hypercube protocol
as a subroutine.
For the convex hulls, we make use of the parallel algorithm of Miller which has been designed for hypercubes~\cite{parallelConvexHullMiller}.
The protocol requires $n$ sorted points.
More precisely, for hypercube nodes $h_1$ and $h_2$ with $id(h_1) < id(h_2)$, $h_1$ has to store a node of the ad hoc network with smaller ID than the node of the ad hoc network which is stored by $h_2$. \\
First, we apply the hypercube protocol of Section~\ref{section:hypercubeProtocol} and sort the points afterwards.
Sorting $n$ points in a hypercube can be done in $\mathcal{O}(\log n)$ communication rounds on expectation with the algorithm of Reif and Valiant~\cite{valiantSort}.
Upon termination, Miller's algorithm is applied which ensures
that each node of the ring knows every convex hull node and especially each convex hull node identifies itself as a convex hull node.
The following theorem follows:

\begin{theorem}
	Given a hole ring with $k$ nodes, the convex hull of this hole ring can be calculated in $\mathcal{O}(\log k)$ communication rounds on expectation.
\end{theorem}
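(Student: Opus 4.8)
The plan is to prove the theorem by composing three phases, each running in $\mathcal{O}(\log k)$ communication rounds, and arguing that they pipeline so that the total cost remains $\mathcal{O}(\log k)$ in expectation. First I would invoke the hypercube lemma of Section~\ref{section:hypercubeProtocol}: a ring of $k$ nodes is transformed into a hypercube overlay over the long-range links in $\mathcal{O}(\log k)$ rounds, after which every boundary node holds a $\log k$-bit hypercube ID together with the overlay edges to its $\log k$ hypercube neighbours. The key point to secure here is the emulation: each abstract hypercube edge is realized by a long-range link between the two corresponding ring nodes, so a single parallel communication step in the hypercube model costs exactly $\mathcal{O}(1)$ synchronous rounds in our network.

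Second, I would sort the $k$ stored coordinates along the hypercube using the randomized algorithm of Reif and Valiant~\cite{valiantSort}, which runs in expected $\mathcal{O}(\log k)$ rounds on a $k$-node hypercube. The purpose of this phase is to establish the precondition of Miller's algorithm, namely the invariant that $id(h_1) < id(h_2)$ implies the point stored at $h_1$ precedes, in the relevant geometric order, the point stored at $h_2$, exactly as required by~\cite{parallelConvexHullMiller}. Replacing Reif--Valiant by the deterministic Batcher bitonic sort would give $\mathcal{O}(\log^2 k)$ rounds; it is precisely this randomized sort that yields the $\mathcal{O}(\log k)$ \emph{expected} bound and hence the ``on expectation'' qualifier in the statement.

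Third, with the points sorted I would run Miller's parallel convex-hull algorithm~\cite{parallelConvexHullMiller}, which computes the hull on a $k$-node hypercube in $\mathcal{O}(\log k)$ parallel steps. By the emulation established in the first phase these steps cost $\mathcal{O}(\log k)$ rounds in our model, and upon termination every ring node knows the full set of convex-hull nodes while each hull node recognizes itself as such. Summing the three phases gives $\mathcal{O}(\log k)$ rounds in expectation, which is the claim.

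The main obstacle I expect is the emulation argument rather than any of the three cited sub-results: I must check carefully that one parallel step of the $k$-node hypercube, in which every node simultaneously exchanges a message with each of its $\log k$ neighbours, is faithfully realized without congestion by $\mathcal{O}(1)$ synchronous rounds over the long-range links, and that the per-node message and storage load stays polylogarithmic throughout (as needed for the global bound of Theorem~\ref{theorem:mainTheorem}). A secondary point is the assumption that $k$ is a power of two; for general $k$ one pads the ring as noted in Section~\ref{section:hypercubeProtocol}, and I would confirm that the padding neither alters the computed hull nor inflates the round count beyond $\mathcal{O}(\log k)$.
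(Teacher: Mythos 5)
Your proposal follows essentially the same route as the paper: build the hypercube overlay via the protocol of Section~\ref{section:hypercubeProtocol} in $\mathcal{O}(\log k)$ rounds, sort the points with the randomized algorithm of Reif and Valiant~\cite{valiantSort} in expected $\mathcal{O}(\log k)$ rounds (which is indeed the sole source of the ``on expectation'' qualifier), and then run Miller's parallel convex hull algorithm~\cite{parallelConvexHullMiller}, summing to $\mathcal{O}(\log k)$ expected rounds. Your added attention to the hypercube emulation overhead and to padding when $k$ is not a power of two goes slightly beyond the paper's own (rather terse) argument, but the decomposition and the key cited results are identical.
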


\subsection{Hole Detection} \label{section:holeDetection}
In Section~\ref{section:hypercubeProtocol}, we have seen how to orient the cycle of nodes along the outer boundary clockwise and the ring of each hole counterclockwise.
However, boundary nodes locally cannot detect whether these cycles are oriented clockwise or counterclockwise and hence cannot decide whether they are located on the outer boundary or on a hole.
The idea to let nodes distinguish these cases is to sum up angles along each boundary into the direction of the orientation.
Let $v_1,v_2$ be a predecessor and a successor along a boundary.
In case walking from $v_1$ to $v_2$ requires a left turn, the angle between $v_1$ and $v_2$ is subtracted from the current sum.
Angles of right turns are added.
The result would be $360^\circ$ for the outer boundary and $-360^\circ$ for each hole~\cite{10.1007/978-3-319-72751-6_10}. \\
The summation along a boundary could be done by a token passing technique initiated by a leader.
This technique, however, requires a linear number of communication rounds for each cycle.
To improve the runtime, we sum angles in parallel to the hypercube protocol. in the following way:
In addition to the minimal ID, we also exchange the sum of angles with each edge of the pointer jumping procedure.
At the end, every node of the ring knows the sum of all angles along the boundary.
Hence, each node can decide whether it is a hole node in $\mathcal{O}(\log n)$ communication rounds. \\
For determining outer holes, we need a second run of convex hull computations along the outer boundary.
Note that outer holes are defined by an edge of the outer convex hull of the point set (see Definition~\ref{definition:outerHole}).
After the convex hull of the outer boundary has been computed, a second run is started between every pair of consecutive convex hull nodes whose distance exceeds the transmission range of a node.
All in all, we compute the convex hull of each hole and of the outer boundary to be able to distinguish the outer boundary and holes.
Afterwards we start a second run of convex hull computations for each outer hole determined by the convex hull of the outer boundary from the first run.
Finally, we have computed the convex hull of each hole in the network.

\subsection{Convex Hull Distribution} \label{section:convexHullDistribution}
In this section, we describe a strategy guaranteeing that all convex hull nodes are eventually connected in a clique via long-range links such that each convex hull node is locally able to compute an Overlay Delaunay Graph (see Section~\ref{section:convexHullsRoutingProtocol}).
The main observation of this section is that nodes of a convex hull locally cannot decide in which directions other holes are located (or even the existence of other holes).
Hence, we need to spread the information about convex hulls in the entire network.
A naive approach is to use a broadcast technique in which every convex hull node broadcasts itself together with the nodes which also belong to its convex hull in the network.
The runtime is limited by the diameter of the network (regarding hop-distance) which can be $\Theta\left(n\right)$ in $2$-localized Delaunay Graphs.
To achieve a faster distribution of broadcasts, we use an additional Overlay Network via the long-range links which only has a logarithmic diameter.
For doing so, we make use of a recently developed distributed protocol by Gmyr et al.\ which is designed for Hybrid Communication Networks~\cite{gmyr_et_al:LIPIcs:2017:7375}.
The protocol ensures that all nodes of the network are connected in a rooted tree via long-range links after $\mathcal{O}(\log^2 n)$ communication rounds.
The tree has a height of $\mathcal{O}(\log n)$ and a constant degree.
Consequently, the diameter of the tree is $\mathcal{O}(\log n)$.
As the diameter is only logarithmic, the tree allows us to distribute references of convex hull nodes in $\mathcal{O}(\log n)$ communication rounds in the following way:
Each convex hull node can direct its own reference both towards the root and into the subtree below itself.
The root redirects the reference into every other subtree.
This procedure avoids that nodes receive the same broadcast message multiple times.
The total runtime of this step is $\mathcal{O}(\log^2 n)$ as the tree has to be established initially. \\
So far, we have seen, that the $2$-localized Delaunay Graph, convex hulls of nodes and also the distribution of convex hull information can be achieved efficiently in $\mathcal{O}(\log^2 n)$ communication rounds which is dominated by the preprocessing protocol for the rooted tree.
The only part we left open until now is the routing protocol for nodes located in the same bay area.
The following Section~\ref{section:incover} deals with this problem.

\subsection{Dominating Set Protocol} \label{section:incover}
In cases where both the source and the target node of a routing request are located in the same bay area of a hole,
we have seen that a Dominating Set of the hole ring in that particular bay area helps to find $c$-competitive paths (see Section \ref{section:dominatingSet}). \\
The computation of a smallest possible dominating set, however, is proven to be NP-complete.
In this paper, we make use of the distributed dominating set protocol by Jia et al.\ which achieves
a $\mathcal{O}(\log \Delta)$-approximation of the smallest possible dominating set with high probability.
The parameter $\Delta$ denotes the degree of the network.
As we are computing dominating sets for hole rings in a bay area, $\Delta = 2$ in our scenario.
The protocol needs $\mathcal{O}(\log n \cdot \log \Delta)$ communication rounds with high probability.
Hence, we are able to compute a constant approximation of the smallest possible dominating set of a hole ring in a bay area in $\mathcal{O}(\log n)$ communication rounds with high probability.
We only have to take care that the protocol does not involve nodes of different bay areas.
However, convex hull nodes know that they are part of two bay areas and can take part in each dominating set protocol independently by only considering the neighbor of each particular bay area.

\section{Dynamic Scenario} \label{section:dynamicScenario}
In a real-world scenario with an Hybrid Communication Network consisting of smartphones, our assumption of immobile nodes is rather unrealistic.
In this section, we allow participants to move in each timestep while keeping the Unit Disk Graph connected.
Once the Overlay tree for fast exchange of convex hull information is built (see Section~\ref{section:convexHullDistribution}), we can obtain new convex hull information in only $\mathcal{O}(\log n)$ communication rounds.
As long as nodes do not leave and join the network, the Overlay tree remains valid as its structure does not depend on the position of the nodes.
Hence, the dominating runtime of $\mathcal{O}(\log^2 n)$ communication rounds for the tree is only required in an initial setup.
Therefore, a periodical re-execution of all protocols except the protocol for the distributed Overlay tree allows us to find competitive paths in a scenario where nodes are allowed to change their positions.

\section{Conclusion and Future Work}
In this paper, we investigated a Hybrid Communication Network consisting of a wireless ad hoc network, i.e., a $2$-localized Delaunay Graph, and an Overlay Network built via long-range links for the purpose of finding $c$-competitive paths in the ad hoc network in $\mathcal{O}(\log n)$ communication rounds.
Due to radio holes in the wireless ad hoc network, online routing strategies perform poor with respect to length of paths.
Therefore, we considered an Overlay Delaunay Graph consisting of the nodes of convex hulls of each radio hole.
We proved that knowledge about convex hulls suffices to find $c$-competitive paths in the $2$-localized Delaunay Graph.
Furthermore, we developed distributed protocols that detect holes in the ad hoc network, compute the convex hulls of each hole and distribute the information about convex hulls in the network such that each convex hull node locally stores an Overlay Delaunay Graph.
The Overlay Delaunay Graph enables convex hull nodes to compute competitive paths between nodes in the ad hoc network.
We proved that the total runtime of our protocols is $\mathcal{O}(\log^2 n)$ communication rounds.
When considering a dynamic scenario in which nodes are allowed to change their positions, we need $\mathcal{O}(\log^2 n)$ communication rounds for an initial setup but afterwards we are able to recompute the entire Overlay Network in only $\mathcal{O}(\log n)$ communication rounds.
Hence, our protocols are able to handle a dynamic scenario very efficiently. \\
Moreover, there are several challenging aspects which can be investigated in future research.
In this paper, we considered non-intersecting convex hulls.
The natural next step could be the design of routing strategies that can deal with finding competitive paths in areas of intersecting convex hulls.
Besides, we concluded our paper with a dynamic scenario in which nodes are allowed to move.
Our solution is to periodically recompute the entire Overlay Network.
This might not always be the best solution as usually nodes do not move arbitrarily fast.
Hence, a model with bounded movement speed could be investigated in which only parts of the Overlay Network have to be recomputed.
A further dynamic which could be considered, is joining and leaving nodes.
Lastly, our model does not tackle physical aspects of wireless communication.
Interesting aspects are for example wireless interference in crowded areas.
Also the signal power of wireless rays can be integrated into our theoretical model.


\bibliography{hybridBib}

\newpage

\appendix

\end{document}